\documentclass[11pt]{article}
\usepackage{amssymb, graphicx}
\headheight=8pt \topmargin=0pt \textheight=600pt
\textwidth=460pt \oddsidemargin=18pt \evensidemargin=18pt


\newtheorem{theorem}{Theorem}
\newtheorem{lemma}{Lemma}

\newtheorem{corollary}{Corollary}
\newtheorem{remark}{Remark}

\newenvironment{proof}
{\begin{trivlist} \item[]{\bf Proof. }}%
{\hspace*{\fill}$\rule{.3\baselineskip}{.35\baselineskip}$\end{trivlist}}

\newcommand{\R}{\mathbb{R}}

\newcommand{\Z}{\mathbb{Z}}
\newcommand{\N}{\mathbb{N}}

\begin{document}

\title{\bf Breather continuation from infinity \\ in nonlinear oscillator chains}

\author{Guillaume James$^{a}$ and Dmitry Pelinovsky$^{b}$ \\
{\small $^a$ Laboratoire Jean Kuntzmann, UMR CNRS 5224,} \\
{\small BP 53, 38041 Grenoble Cedex 9, France}\\
{\small $^b$ Department of Mathematics and Statistics, McMaster University, } \\
{\small Hamilton, Ontario, Canada, L8S 4K1}
}

\date{\today}
\maketitle


\begin{abstract}
Existence of large-amplitude time-periodic breathers localized near
a single site is proved for the discrete Klein--Gordon equation, in the case
when the derivative of the on-site potential has a compact support.
Breathers are obtained at small coupling
between oscillators and under nonresonance conditions.
Our method is different from the classical anti-continuum limit
developed by MacKay and Aubry, and yields in general
branches of breather solutions that cannot
be captured with this approach.
When the coupling constant goes to zero,
the amplitude and period of oscillations
at the excited site go to infinity.
Our method is based on near-identity transformations, analysis of
singular limits in nonlinear oscillator equations, and
fixed-point arguments.
\end{abstract}

\section{Introduction}

Recent studies of spatially localized and time-periodic oscillations
(breathers) in lattice models of DNA
\cite{James3,James4} call for systematic analysis
of such excitations in the discrete
Klein--Gordon equation
\begin{equation}
\label{KGlattice} \ddot{x}_n + V'(x_n) = \gamma \left( x_{n+1} - 2
x_n + x_{n-1} \right), \quad n \in \mathbb{Z},
\end{equation}
where $\gamma >0$ is a coupling constant, $V :
\mathbb{R} \to \mathbb{R}$ is a nonlinear potential, and
${\bf x}(t)=\{x_n(t) \}_{n \in \mathbb{Z}}$ is a sequence of real-valued amplitudes
at time $t \in \mathbb{R}$.

In the classical Peyrard-Bishop model for DNA \cite{pey04},
$V$ is a Morse potential having a global minimum at
$x = 0$, confining as $x \to -\infty$ and saturating at a constant level as
$x \to \infty$. However, recent studies \cite{weber,James1,James3}
argued that the Morse potential should be replaced by a potential with
a local maximum at $x = a_0 > 0$, which induces a
double-well structure, where one of the wells extends to infinity
(both kinds of potentials are depicted in Figure \ref{figpot}).
The existence of breathers residing in the
potential well near $x = 0$ can be proved with classical methods
such as the center manifold reduction for maps \cite{james,James2},
variational methods \cite{aubkadel,pankovb}, and the continuation
from the anticontinuum limit $\gamma \rightarrow 0$ \cite{Aubry2,MA94,sepmac}.

\begin{figure}
\begin{center}
\includegraphics[width = 10cm]{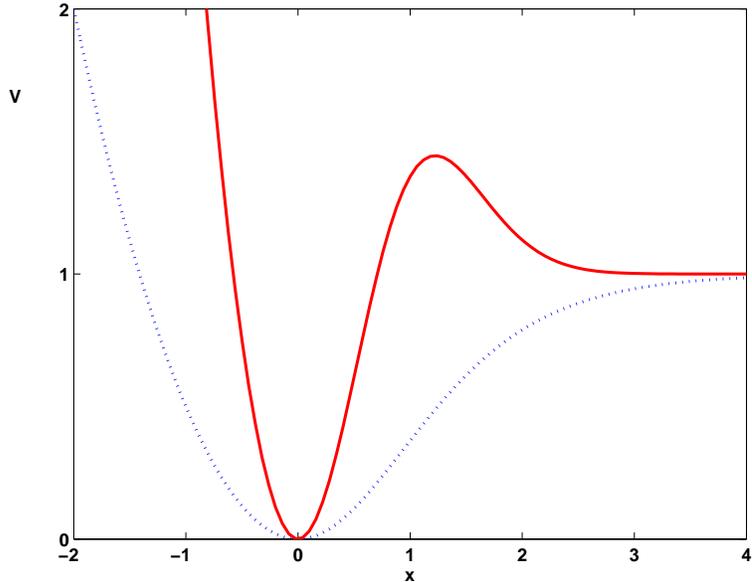}
\caption{\label{figpot}Morse potential (dashed line) and modified double-well potential (full line).}
\end{center}
\end{figure}

A more delicate problem is the existence of
large-amplitude breathers residing in the other
potential well which extends to infinity.
Large-amplitude stationary
solutions bifurcating from infinity as
$\gamma \to 0$ have been obtained in
\cite{James3}. These solutions are localized near a single site,
say $n = 0$, and their amplitude diverges
as $\gamma \to 0$. Large-amplitude breathers
in a finite-size neighborhood of these stationary solutions have been
constructed in \cite{James4} for small coupling $\gamma$, using the contraction mapping
theorem and scaling techniques.
These large-amplitude breathers
oscillate beyond the potential barrier of $V$ at $x = a_0$,
and their amplitude goes to infinity
as $\gamma \to 0$. Existence of
large-amplitude breathers oscillating everywhere above the potential barrier
of $V$ was left open in \cite{James4}.

Our goal is to show the existence of large-amplitude breathers
oscillating in several potential wells, setting-up a continuation of these
solutions from infinity as $\gamma \to 0$.
To illustrate some key points of our analysis,
let us consider the example
\begin{equation}
\label{potential-example}
V(x) = \frac{1}{4} (1 + e^{-x^2}(x^2-1)).
\end{equation}
Here $V$ has a global minimum at $x = 0$, a pair of symmetric global
maxima at $x = \pm a_0$ with $a_0 > 0$, and $\lim_{x\rightarrow\pm\infty}V(x) = \frac{1}{4}$.

In the standard anti-continuum limit, one sets $\gamma = 0$ and
$x_n = 0$ for all $n \in \mathbb{Z} \backslash \{0\}$, and one considers
a time-periodic solution $x_0(t) \equiv x(t)$ of the nonlinear oscillator equation
\begin{equation}
\label{oscil}
\ddot{x} + V'(x) = 0.
\end{equation}
Under a nonresonance condition, this compactly supported time-periodic solution can be continued
for $\gamma \approx 0$ into an exponentially localized time-periodic breather solution using the implicit function theorem  \cite{MA94}.

The phase plane $(x,\dot{x})$ and the
frequency-amplitude $(\omega,a)$ diagram of the nonlinear oscillator equation
(\ref{oscil}) with the potential (\ref{potential-example})
are shown on Figure \ref{fig2}. In this case, the periodic
solution $x(t)$ has a cut-off amplitude at $a = a_0$. Only the family
of periodic solutions with $a \in (0,a_0)$ can be continued by the
anti-continuum technique developed by MacKay and Aubry \cite{MA94}.

In addition, there are two families of unbounded solutions: one corresponds
to oscillations beyond the potential barrier
of $V$ for $|x| > a_0$ and the other one corresponds to oscillations above the potential barrier.
Roughly speaking, the new technique developed in  \cite{James4} allows one to obtain
large amplitude breathers ``close" to unbounded solutions of the first family for $\gamma \approx 0$.

The present paper considers large-amplitude breathers near the second family of
unbounded solutions.
These two families of breathers are obtained by ``continuation from infinity"
for arbitrarily small values of $\gamma$, but without reaching $\gamma = 0$.
In this case, the potential $V$ in the nonlinear oscillator equation
(\ref{oscil}) can be simply replaced by
\begin{equation}
\label{potential-new}
V_{\gamma}(x) = V(x) + \gamma x^2.
\end{equation}
The potential $V_{\gamma}$ includes a restoring force originating
from the nearest-neighbors coupling in the discrete Klein--Gordon equation (\ref{KGlattice}).
As $\gamma\rightarrow 0$, the amplitudes and periods of the resulting breathers
go to infinity. As a result, we need a careful control of nonresonance
conditions in order to prove the existence of such breathers.

Although a part of our continuation procedure involving the contraction mapping theorem
is close to the one developed in \cite{James4}, our mathematical
analysis is quite different because our breather solutions scale differently
in the different potential wells, which induces some singular
perturbation analysis and more delicate estimates than in \cite{James4}.
Note also that the contraction mapping theorem has been used by
Treschev \cite{Treschev} to prove the existence of other types
of localized solutions (solitary waves) in Fermi-Pasta-Ulam lattices,
in which nearest-neighbors are coupled by an anharmonic potential having a repulsive
singularity at a short distance. In this case, the existence problem yields an
advance-delay differential equation with other kinds of
mathematical difficulties.

\begin{figure}
\begin{center}
\includegraphics[width = 8cm]{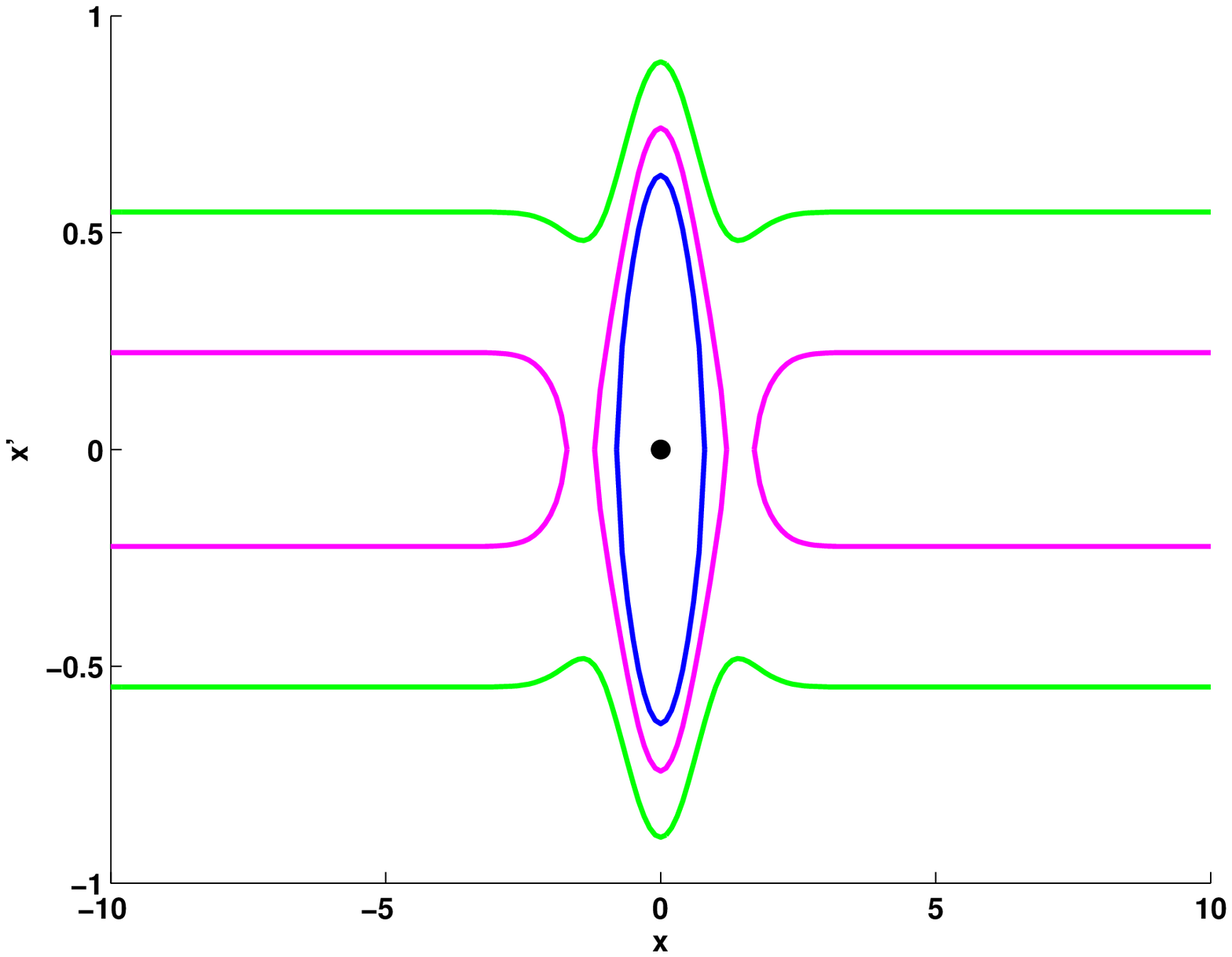}
\includegraphics[width = 8cm]{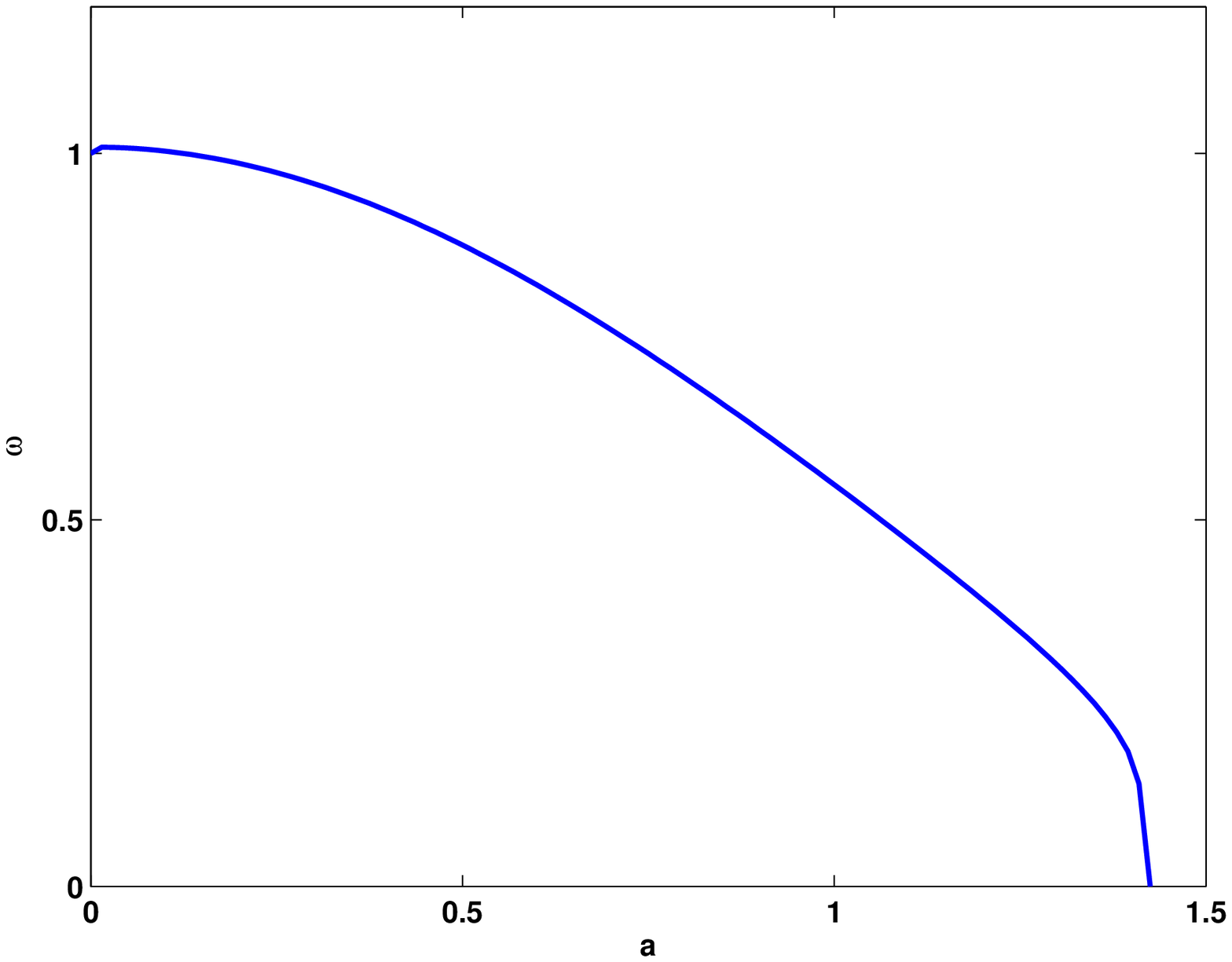}
\caption{\label{fig2} The phase plane $(x,\dot{x})$ (left) and the frequency--amplitude diagram $(\omega,a)$ (right)
for the potential (\ref{potential-example}).}
\end{center}
\end{figure}

To simplify our analysis, we assume that $V$ is symmetric and bounded, whereas
$V^\prime$ has a compact support. To be precise, the following properties on $V$ are assumed:
\begin{itemize}
\item[P1] $V \in C^2(\mathbb{R})$ and $V(-x) = V(x)$ for all $x \in \R$;

\item[P2] There is $x_0 > 0$ such that $V \in C^7(-x_0,x_0)$
and the Taylor expansion of $V$ at $x = 0$ is
$V(x) = \frac{1}{2} \kappa^2 x^2 + {\cal O}(x^6)$  with $\kappa > 0$;

\item[P3] $0 \leq V(x) \leq V_L$ for all $x \in \mathbb{R}$ and some $V_L > 0$;

\item[P4] $V'(x)$ is compactly supported on $[-a_0,a_0]$ for some
$a_0 > 0$ such that \\
$V(x) = V_{\infty}$ for $|x| \geq a_0$ and
$V_{\infty} \in (0,V_L]$.
\end{itemize}

Assumption (P1) allows us to consider symmetric periodic oscillations, which can be studied on
the quarter of the fundamental period.
This assumption simplifies the presentation but is not essential, and our analysis
could be extended e.g. to potentials confining at $-\infty$ (as in Figure \ref{figpot}).

Assumption (P2) allows us to develop a contraction mapping
argument for the small-amplitude oscillations on the sites $n \neq 0$,
a procedure which cannot be carried out if a quartic term is present
in the expansion of $V$ near the origin.
It would be useful to relax this condition, which assumes a very weak
anharmonicity of small amplitude oscillations. Note that the quartic term in $V(x)$ near $x = 0$ is also
excluded in the recent analysis of scattering of small initial data to
zero equilibrium by Mielke \& Patz \cite{Mielke}.

Assumption (P3) allows for
large-amplitude oscillations at the central site $n = 0$.

Assumption (P4) allows us to consider
linear oscillations of the central site outside the compact support of $V'$.
This property is used in Lemma \ref{lemma-spectrum} below to solve the singularly perturbed
oscillator equation for renormalized oscillations at $n=0$. This compact support assumption
is quite restrictive, and it would be interesting to relax it in a future work, by
considering e.g. exponentially decaying potentials (as in example (\ref{potential-example}))
and treating exponential tails as perturbations of the present case.

We note that Fura \& Rybicki \cite{Fura} have
proved the existence of periodic solutions bifurcating from
infinity for a class of finite-dimensional Hamiltonian systems
with asymptotically linear potentials using degree theory. Our analysis is
different and consists in two steps. We first reduce the infinite-dimensional
Hamiltonian system to a perturbed oscillator equation describing large amplitude
oscillations at the breather center, using the contraction mapping theorem.
Once this has been achieved, we solve the reduced problem using
a topological method (Schauder's fixed point theorem).

Our main result is the existence of the large-amplitude breathers
if the potential $V$ satisfies assumptions (P1)--(P4) as well as the technical
non-degeneracy condition in equation (\ref{condition-central-nonresonance}) below.
As further problems, it would be
interesting to analyze the existence of multibreather solutions
bifurcating from infinity, as well as the stability of such solutions,
as it was done previously for finite-amplitude breathers near
the standard anti-continuum limit
(see, e.g., \cite{macsep,Bambusi,Aubry2,marinstab,archistab,koukou}).

The article is organized as follows. Section 2 describes the main results.
Large-amplitude oscillations near $n = 0$ are analyzed in Section 3.
Small-amplitude oscillations for $n \neq 0$ are considered in Section 4.
The proof of the main theorem is given in Section 5. Section 6 gives a proof
that the large-amplitude breather decays exponentially in $n \in \Z$.

{\bf Acknowledgement.} This work was initiated during the visit of
D.P. to Laboratoire Jean Kuntzmann supported in part by the
Ambassade de France au Canada. D.P. thanks the members of the
Laboratory for hospitality during his visit.

\section{Main results}

We shall consider the discrete Klein--Gordon equation (\ref{KGlattice}) for
small $\gamma > 0$ and assume that the breather is localized near the central
site $n = 0$.
We consider oscillations
in the potential $V_{\gamma}(x)$ at the energy level $E$:
\begin{equation}
\label{newton-particle}
\ddot{x} + V_{\gamma}'(x) = 0 \quad \Rightarrow \quad E = \frac{1}{2} \dot{x}^2 + V_{\gamma}(x).
\end{equation}
Thanks to assumption (P3), the anti-continuum limit $\gamma \to 0$ is singular
for $E > V_L$ in the sense that a bounded trajectory of system
(\ref{newton-particle}) trapped by the quadratic potential $\gamma
x^2$ degenerates into an unbounded trajectory as $\gamma \to 0$.

We would like to select a unique $T$-periodic solution of (\ref{newton-particle})
by fixing its energy $E > V_L$ and choosing $\gamma$ small enough.
For a fixed $E > V_L$, we will be working for sufficiently small
$\gamma > 0$ to ensure that $V_{\gamma}(a) = E$ admits a unique
positive solution $a(E,\gamma )$. More precisely, thanks to assumptions (P3) and (P4),
we obtain a unique solution $a = (E - V_{\infty})^{1/2} \gamma^{-1/2}$
for $\gamma < (E-V_L)/a_0^2$.
Fixing $\dot{x}(0) = 0$, we can parameterize periodic solutions
by $x(0) = a(E,\gamma) > 0$, and their period can be written $T = T(E,\gamma)$.
Thanks to assumption (P4), we shall prove (in Section \ref{section-2}) that for any $E > V_L$
\begin{equation}
\label{period-expansion-intro} T(E,\gamma) = \frac{\sqrt{2} \pi}{\gamma^{1/2}}
+ \lambda(E) + {\cal O}(\gamma) \quad \mbox{\rm as} \quad \gamma \to 0
\end{equation}
where
$$
\lambda(E) = 2 \sqrt{2} \left( \int_0^{a_0} \frac{dx}{(E - V(x))^{1/2}} - \frac{a_0}{(E - V_{\infty})^{1/2}} \right).
$$

Thanks to assumption (P1), the $T$-periodic solution $x(t)$ of the nonlinear oscillator equation
(\ref{newton-particle}) with $x(0) = a > 0$ and $\dot{x}(0) = 0$
is symmetric with respect to reflections about the points $t = 0$ and $t = \frac{T}{2}$
and anti-symmetric with respect to reflection about the points $t = \frac{T}{4}$ and $t = \frac{3 T}{4}$.
Therefore, the $T$-periodic solution satisfies
\begin{equation}
\label{symmetry}
x(-t) = x(t) = -x\left(\frac{T}{2}-t \right) \quad t \in \R.
\end{equation}

The normalized frequency of oscillations is defined by
\begin{equation}
\label{defw0}
\omega_0(E,\gamma) = \frac{2\pi}{T(E,\gamma)} \gamma^{-1/2}
\end{equation}
such that $\omega_0(E,\gamma) \to \sqrt{2}$ as $\gamma \to 0$ for a fixed $E > V_L$.
To avoid resonances of large-amplitude oscillations
at the central site $n = 0$ with small-amplitude oscillations at the other sites $n \in \mathbb{Z} \backslash \{0\}$,
we will show (in Section \ref{section-3}) that the following non-resonance conditions
\begin{equation}
\label{resonances} \kappa^2 - m^2 \gamma \omega_0^2(E,\gamma) + 2 \gamma (1 -
\cos(q)) \neq 0,
\end{equation}
must be satisfied for all $m \in \mathbb{Z}$ and all $q \in [-\pi,\pi]$.

For a fixed $E > V_L$, we shall now consider the non-resonant set of parameters
(breather frequency $\omega = \frac{2\pi}{T}$ and coupling constant $\gamma$).
We plot on Figure \ref{fig3} the prohibited regions between the boundaries
of the non-resonant set, given by the curves
$\omega = \frac{\kappa}{m}$ and $\omega = \frac{\sqrt{\kappa^2
+ 4 \gamma}}{m}$,  together with the curve
$\omega = \sqrt{\gamma}\,  \omega_0(E,\gamma)$.
Non-resonance conditions (\ref{resonances}) are satisfied if $\gamma$ belongs to the set
$C_{E} = \cup_{m \geq m_0} (\Gamma_m,\gamma_m)$, where
$\gamma=\Gamma_m$ and $\gamma=\gamma_m$ correspond to
the intersections of the above curves starting with some $m_0 \geq 1$,
i.e. $\Gamma_m$ and $\gamma_m$ satisfy implicit equations
\begin{equation}
\label{Gamma-gamma}
\frac{\sqrt{\kappa^2
+ 4 \,\Gamma_m}}{m+1} = \sqrt{\Gamma_m} \omega_0(E,\Gamma_m), \quad
\frac{\kappa}{m}= \sqrt{\gamma_m} \omega_0(E,\gamma_m ), \quad m \geq m_0.
\end{equation}
Equations (\ref{Gamma-gamma}) can be solved for $m$ large enough
thanks to expansion (\ref{period-expansion-intro}) and the implicit function arguments,
yielding as $ m \to \infty$
$$
\Gamma_m =
\frac{\kappa^2}{2 m^2}\Big(1 + \frac{\kappa \lambda(E)}{\pi m} - \frac{2}{m}+ {\cal O}(m^{-2}) \Big),
\quad \gamma_m =
\frac{\kappa^2}{2 m^2}
\Big(1+\frac{\kappa \lambda(E)}{\pi m}+ {\cal O}(m^{-2}) \Big).
$$
In particular, we note that $\Gamma_m < \gamma_m$ and $|\gamma_m - \Gamma_m| = {\cal O}(m^{-3})$ as $m \to \infty$.

\begin{figure}
\begin{center}
\includegraphics[width = 10cm]{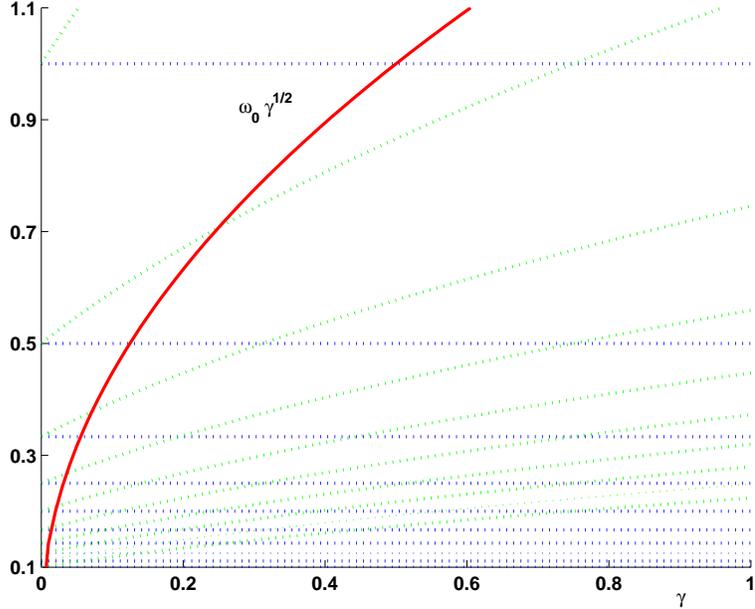}
\caption{\label{fig3}
Resonance tongues on the plane $(\gamma,\omega)$ (delimited by dotted lines),
and breather frequency curve $\omega = \omega_0 \gamma^{1/2}$ (solid line)
for $\kappa = 1$. No resonances occur for $\gamma \in \cup_{m \geq m_0} (\Gamma_m,\gamma_m)$,
where breather frequencies lie outside the resonance tongues.}
\end{center}
\end{figure}

We can now state the main result of this article. Note that the existence
of breathers is only obtained for a subset $\tilde{C}_{E,\nu} \subset C_E$
of the non-resonant values of the coupling constant $\gamma$, because our method
breaks down near the boundary of $C_E$.

\begin{theorem}
\label{theorem-main}
Assume (P1)--(P4) on $V(x)$ and fix $E > V_L$. Let $x(t)$ be a $T(E,\gamma )$-periodic solution
of the nonlinear oscillator equation (\ref{newton-particle}) for small $\gamma > 0$
satisfying symmetries (\ref{symmetry}) and assume that $\lambda'(E) \neq 0$, i.e.
\begin{equation}
\label{condition-central-nonresonance}
 \int_0^{a_0} \frac{dx}{(E -
V(x))^{3/2}} - \frac{a_0}{(E - V_{\infty})^{3/2}} \neq 0.
\end{equation}
Fix $\nu \in (0,1)$ and consider the set of coupling constants
$\tilde{C}_{E,\nu} = \cup_{m \geq m_0}
(\tilde{\Gamma}_m,\tilde{\gamma}_m) \subset C_E$, where
$\tilde{\Gamma}_m, \tilde{\gamma}_m$ are defined by the implicit equations
\begin{equation}
\label{Gamma-gamma-tildethm}
\frac{\sqrt{\kappa^2
+ 4 \, \tilde{\Gamma}_m}}{\sqrt{(m+1)^2 - \nu (m+1)}} =
\sqrt{\tilde{\Gamma}_m} \omega_0(E,\tilde{\Gamma}_m), \quad
\frac{\kappa}{\sqrt{m^2 + \nu m}}= \sqrt{\tilde{\gamma}_m} \omega_0(E,\tilde{\gamma}_m ),
\end{equation}
for $m\geq m_0$, and satisfy as $m\rightarrow +\infty$
$$
\tilde{\Gamma}_m =
\frac{\kappa^2}{2 m^2}\Big(1 + \frac{\kappa \lambda(E)}{\pi m} - \frac{2-\nu}{m}
+ {\cal O}(m^{-2}) \Big), \quad \tilde{\gamma}_m =
\frac{\kappa^2}{2 m^2}
\Big(1 + \frac{\kappa \lambda(E)}{\pi m} - \frac{\nu}{m}  + {\cal O}(m^{-2}) \Big).
$$
For all sufficiently small $\gamma$ in $\tilde{C}_{E,\nu}$,
there exists a $T$-periodic spatially localized
solution ${\bf x}(t) \in H^2_{\rm per}((0,T);l^2(\Z))$ of
the Klein--Gordon lattice (\ref{KGlattice}) such that
$$
x_n(t) = x_{-n}(t), \quad n \in \mathbb{Z}; \quad
{\bf x}(-t) = {\bf x}(t) = -{\bf x}\left(\frac{T}{2} - t\right), \quad  t \in \mathbb{R};
$$
and
\begin{eqnarray}
\label{bound-final}
\exists C > 0 : \quad \sup_{t \in [0,T]} | x_0(t) - x(t) | \leq C \gamma^{-1/4}, \quad
\sup_{n \geq 1} \sup_{t \in [0,T]} | x_n(t) | \leq C \gamma^{1/4}.
\end{eqnarray}
\end{theorem}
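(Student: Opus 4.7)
I follow the two-stage strategy flagged in the introduction. Decompose $x_0(t) = x(t) + y_0(t)$ and $x_n(t) = y_n(t)$ for $n \neq 0$, where $x(t)$ is the prescribed $T(E,\gamma)$-periodic orbit of (\ref{newton-particle}) at energy $E$ with symmetries (\ref{symmetry}). Using the identity $V_\gamma'(x) = V'(x) + 2 \gamma x$ and the reflection symmetry $y_{-n} = y_n$, the discrete equation (\ref{KGlattice}) splits into a tail system
\[
\ddot{y}_n + V'(y_n) = \gamma (y_{n+1} - 2 y_n + y_{n-1}), \quad n \geq 1,
\]
with $y_0$ replaced by $x + y_0$ in the coupling at $n=1$, together with a reduced scalar equation at the central site
\[
\ddot{y}_0 + V'(x + y_0) - V'(x) + 2 \gamma y_0 = 2 \gamma y_1.
\]
The symmetries (\ref{symmetry}) further reduce the unknowns to even-in-$t$, even-in-$n$ functions on $t \in [0, T/4]$ and $n \geq 0$.

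In the first stage I fix $y_0$ in a ball of radius $O(\gamma^{-1/4})$ and solve the tail system for $(y_n)_{n \geq 1}$ by the contraction mapping theorem. The relevant linearization is $L y_n = \ddot{y}_n + \kappa^2 y_n - \gamma(y_{n+1} - 2 y_n + y_{n-1})$, whose Fourier symbol $\kappa^2 - m^2 \omega^2 + 2 \gamma (1 - \cos q)$ is bounded away from zero by a quantity of order $\nu \sqrt{\gamma}$ on $\tilde{C}_{E,\nu}$, since the definitions of $\tilde{\Gamma}_m, \tilde{\gamma}_m$ force every harmonic $m\omega$ to stay a uniform $\nu$-distance from the phonon band $[\kappa, \sqrt{\kappa^2 + 4\gamma}]$. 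Assumption (P2) makes the residual nonlinearity $V'(y_n) - \kappa^2 y_n$ of order $|y_n|^5$ at the origin, a strict improvement over a generic cubic term that is what lets the contraction close in a ball of radius $O(\gamma^{1/4})$ against the singular bound $\|L^{-1}\| = O(\gamma^{-1/2})$. The smoothness and near-sinusoidal character of $x$ at the slow frequency $\omega \sim \sqrt{\gamma}$ controls the high-harmonic content of the forcing $\gamma(x + y_0)$ away from the near-resonant modes. The outcome is a unique tail $y_n = Y_n[y_0;\gamma]$, smooth in $y_0$, with $\sup_n \|Y_n\|_{L^\infty} \leq C \gamma^{1/4}$.

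In the second stage I substitute back to obtain the reduced scalar equation
\[
\ddot{y}_0 + V'(x + y_0) - V'(x) + 2 \gamma y_0 = 2 \gamma Y_1[y_0;\gamma],
\]
to be solved for $y_0$ on the ball of radius $C\gamma^{-1/4}$. A naive contraction now fails: $y_0$ is not small as $\gamma \to 0$, and the linearization $\partial_t^2 + V''(x(t)) + 2\gamma$ has $\dot{x}$ in its kernel on the full periodic space. Restricting to the even-in-$t$ subspace eliminates $\dot x$, but the inverse is still singularly large; its control is precisely where the non-degeneracy hypothesis $\lambda'(E) \neq 0$ enters, asserting that the period $T(E, \gamma)$ depends nontrivially on $E$ at leading order and so preventing a cancellation between the period correction and the linearization defect as $\gamma \to 0$. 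I recast the reduced equation as a fixed-point problem $y_0 = \mathcal{F}[y_0]$ on a convex closed ball of $H^2_{\rm per}$ of radius $C \gamma^{-1/4}$, verify continuity and compactness of $\mathcal{F}$ via the regularizing effect of the linear inverse together with Sobolev embedding, and apply Schauder's theorem.

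The main obstacle is this second stage: one must perform a singular-perturbation analysis of the reduced oscillator, which behaves as the nearly resonant linear operator $\partial_t^2 + 2 \gamma$ on the large set $\{|x(t)| > a_0\}$ where $V'=0$, and as a genuinely anharmonic equation during the brief transits through $[-a_0, a_0]$. Patching a priori estimates of order $\gamma^{-1/4}$ across these two regions, uniformly in $\gamma \in \tilde{C}_{E,\nu}$, is where the hypothesis $\lambda'(E) \neq 0$ and the spectral lemma alluded to in the introduction become essential. Once $y_0$ is produced, the bounds (\ref{bound-final}) follow from the two stages, the imposed symmetries transfer from the ansatz to the full breather $(x_n)_{n \in \Z}$, and the exponential spatial decay is treated separately in Section 6.
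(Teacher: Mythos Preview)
Your two-stage architecture (contraction on the tail, Schauder at the center), the identification of the resolvent bound $\|\mathcal L^{-1}\|=O(\gamma^{-1/2})$ on $\tilde C_{E,\nu}$, the role of the quintic remainder from (P2), and the role of $\lambda'(E)\neq 0$ are all correct and match the paper. However, two concrete mechanisms are missing, and without them the estimates do not close.

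\textbf{Tail stage: the near-identity transformation is not optional.} In rescaled variables the forcing at $n=1$ is $\gamma^{1/2}X_0$ with $\|X_0\|_{H^1}=O(1)$, so the naive bound $\|\mathcal L^{-1}(\gamma^{1/2}X_0)\|_{H^1}\le O(\gamma^{-1/2})\cdot O(\gamma^{1/2})=O(1)$ is too weak. Your appeal to the ``near-sinusoidal character of $x$'' does not repair this: the dangerous modes sit at index $m_*\sim\gamma^{-1/2}$, and since $\|\ddot X_0\|_{L^2}=O(\gamma^{-3/4})$ (Corollary~\ref{lemma-H4-solution}) one has at best $|\hat X_{0,m_*}|^2\lesssim m_*^{-2}\|X_0\|_{H^1}^2=O(\gamma)$, which after multiplying by $(1+m_*^2)|H_{m_*}|^2\sim\gamma^{-2}$ still leaves an $O(1)$ contribution to $\|Y_1\|_{H^1}^2$. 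The paper cures this by the change of variables $X_1=Y_1+\gamma^{1/2}\kappa^{-2}X_0$, which cancels the forcing $\gamma^{1/2}X_0$ against the dominant part $\kappa^2$ of $\mathcal L$ and leaves a residual $\gamma\kappa^{-2}V'(\gamma^{-1/2}X_0)$ of $H^1$-size $O(\gamma^{3/4})$; only then does inversion give the required $O(\gamma^{1/4})$ tail.

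\textbf{Central stage: linearizing around $x$ is exactly the obstacle the paper avoids.} Writing $x_0=x+y_0$ and inverting $\partial_t^2+V''(x(t))+2\gamma$ forces you to control the remainder $V'(x+y_0)-V'(x)-V''(x)y_0$, which involves higher derivatives of $V$ against powers of $y_0=O(\gamma^{-1/4})$; the paper explicitly flags this Taylor expansion as singular. Instead the paper keeps $X_0$ as the unknown and uses a shooting argument (Lemma~\ref{lemma-spectrum}): on $[0,\tau_0]$ where $|X_0|\ge a_0\gamma^{1/2}$ the equation is linear with an explicit solution, on the short transit $[\tau_0,T_0/4]$ an energy estimate controls the flow, and the matching condition $Z(T_0/4)=0$ becomes a scalar root-finding problem in an energy parameter $E^*$, solvable precisely because $\lambda'(E)\neq 0$. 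Schauder is then run with $X_0$ in the domain $D_{\mu,\gamma}$ of (\ref{neighborhood}), whose built-in constraint $X_0(\tau)\ge a_0\sqrt\gamma$ for $\tau\le T_0/4-\gamma^{1/2-\mu}$ is what makes the bound (\ref{bound-on-X-0}) on $V'(\gamma^{-1/2}X_0)$ available in the first place. Your ball-of-radius-$C\gamma^{-1/4}$ domain for $y_0$ carries no such information, and the ``patching a priori estimates across the two regions'' that you defer is in fact the entire content of Lemma~\ref{lemma-spectrum}.
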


\begin{remark}
Since $\| x \|_{L^{\infty}} = a = {\cal O}(\gamma^{-1/2})$ as $\gamma \to 0$,
the first bound in (\ref{bound-final}) shows that the relative error
$\| x_0 - x\|_{L^{\infty}}/\| x \|_{L^{\infty}}$ is as small as
${\cal O}(\gamma^{1/4})$.
\end{remark}

\begin{remark}
If $\nu \in (0,1)$, we still have
$\tilde{\Gamma}_m < \tilde{\gamma}_m$ and $|\tilde{\gamma}_m - \tilde{\Gamma}_m| = {\cal O}(m^{-3})$
as $m \to \infty$. Therefore, the rate of decrease of the interval widths in the set $\tilde{C}_{E,\nu}$
corresponds to the rate of decrease of the widths of the non-resonant intervals in the set $C_E$.
\end{remark}

\begin{remark}
Although we do not attempt here to deal with non-compact
potentials, we believe that assumption (P4) can be
relaxed if $V'(x)$ has a sufficiently fast decay to zero as $|x| \to \infty$.
In that case, we conjecture that the non-resonance condition
(\ref{condition-central-nonresonance}) would be replaced by
$$
Q := \int_0^{\infty} \left[ \frac{1}{(E - V(x))^{3/2}} -
\frac{1}{(E - V_{\infty})^{3/2}} \right] dx \neq 0,
$$
where $V_{\infty} := \lim_{x \to \infty} V(x)$.
Figure \ref{fig4} illustrates that this condition is satisfied for the
particular potential (\ref{potential-example}),
for any finite $E > V_L$
(we note that the value of $Q$ approaches
$0$ as $E \to \infty$).
\end{remark}

Theorem \ref{theorem-main} is proved in Section 5, using intermediate results
established for the single oscillator equation (\ref{newton-particle}) with
a forcing term (in Section 3)
and for the discrete KG equation (\ref{KGlattice}) linearized at zero
equilibrium (in Section 4).
We finish the article with Section 6, where we prove that the amplitude of breather oscillations
decays exponentially in $n$ on $\mathbb{Z}$
in the following sense:
$$
\exists D_0 > 0 : \quad
\sup_{t \in [0,T]} | x_n(t) | \leq (D_0 \gamma)^{(2n-1)/4}, \quad n \geq 2,
$$
for all sufficiently small $\gamma  \in \tilde{C}_{E ,\nu}$.

\begin{figure}
\begin{center}
\includegraphics[width = 10cm]{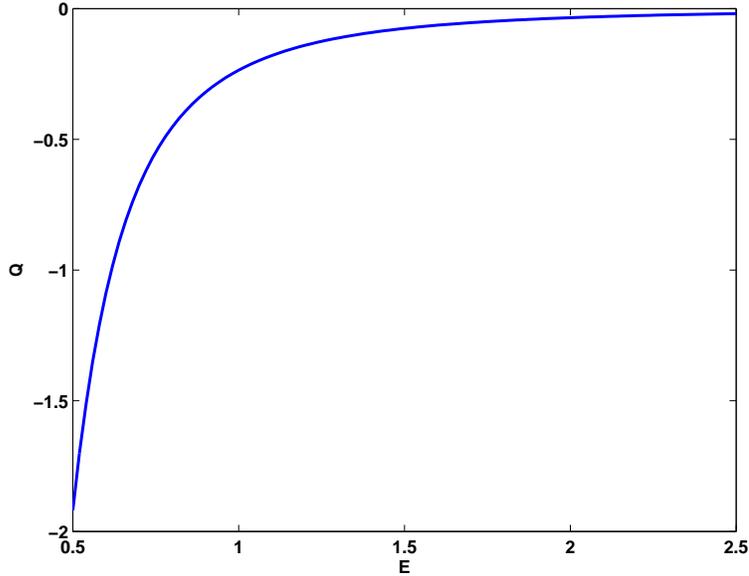}
\caption{\label{fig4} Non-resonance coefficient $Q$ versus energy $E$ for the potential (\ref{potential-example}).}
\end{center}
\end{figure}

\section{Large-amplitude oscillations at a central site}
\label{section-2}

We consider here solutions of the nonlinear oscillator equation (\ref{newton-particle})
in the singular limit $\gamma \to 0$. Assumptions (P1)--(P4) on the potential $V$ are used
everywhere, without further notes.

\begin{lemma}
Fix $E > V_L$.
There exists $\gamma_0 = \gamma_0(E) > 0$ such that for any $\gamma
\in (0,\gamma_0)$, there exist exactly two $T$-periodic solutions
of (\ref{newton-particle}) with amplitude $\| x \|_{L^{\infty}} = (E - V_{\infty})^{1/2}
\gamma^{-1/2}$ satisfying symmetries (\ref{symmetry}) and the asymptotic expansion
\begin{equation}
\label{period-expansion} T = \frac{\sqrt{2} \pi}{\gamma^{1/2}}
+ 2 \sqrt{2} \left( \int_0^{a_0} \frac{dx}{(E - V(x))^{1/2}} - \frac{a_0}{(E - V_{\infty})^{1/2}} \right)
+ {\cal O}(\gamma) \quad \mbox{\rm as} \quad \gamma \to 0.
\end{equation}
\label{lemma-oscillations}
\end{lemma}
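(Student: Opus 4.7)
The plan is to reduce the period to an explicit quadrature, split the integral at $a_0$, and exploit the compact support of $V'$ (assumption (P4)) to obtain a closed-form contribution from the region $|x|>a_0$ plus a regular perturbation from the region $|x|<a_0$.

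Step one is to identify the turning points and count symmetric solutions. Set $\gamma_0 := (E-V_L)/a_0^2$. For $\gamma \in (0,\gamma_0)$, assumption (P4) makes $V_\gamma$ strictly increasing on $[a_0,\infty)$ with $V_\gamma(a_0) = V_\infty + \gamma a_0^2 < E$, so the equation $V_\gamma(a) = E$ has a unique positive root $a = (E-V_\infty)^{1/2}\gamma^{-1/2}$. By (P1), $V_\gamma$ is even, so the energy level $\{\frac{1}{2}\dot x^2 + V_\gamma(x) = E\}$ is a smooth simple closed curve in the $(x,\dot x)$-plane with turning points $\pm a$. This curve carries a unique periodic orbit up to time translation; the reflection symmetries in (\ref{symmetry}) force $\dot x(0) = 0$ (from evenness in $t$), and thus $x(0) \in \{+a,-a\}$, yielding exactly the two claimed solutions, which differ by sign and a half-period shift.

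Step two is the period asymptotics. Starting from the standard quadrature
\[
T \;=\; 2\sqrt{2}\int_0^a \frac{dx}{\sqrt{E - V_\gamma(x)}} \;=\; 2\sqrt{2}\left(\int_0^{a_0} + \int_{a_0}^a\right)\frac{dx}{\sqrt{E - V(x) - \gamma x^2}},
\]
I would treat the two pieces separately. On $[a_0,a]$, assumption (P4) gives $V(x) = V_\infty$ and the identity $E - V_\infty = \gamma a^2$ collapses the integrand to $(\gamma(a^2-x^2))^{-1/2}$, which integrates in closed form to $\gamma^{-1/2}(\pi/2 - \arcsin(a_0/a))$. Since $a_0/a = a_0(\gamma/(E-V_\infty))^{1/2}$, the Taylor expansion $\arcsin(u) = u + O(u^3)$ produces the contribution $\pi/(2\sqrt\gamma) - a_0/(E-V_\infty)^{1/2} + O(\gamma)$. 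On $[0,a_0]$, the integrand is bounded uniformly in $\gamma$ because $E - V(x) \ge E - V_L > 0$ and $\gamma x^2 \le \gamma a_0^2$ is arbitrarily small, so a first-order Taylor expansion in $\gamma$ of $(E-V-\gamma x^2)^{-1/2}$ with integral remainder gives $\int_0^{a_0}(E-V(x))^{-1/2}dx + O(\gamma)$. Summing the two pieces and multiplying by $2\sqrt{2}$ produces (\ref{period-expansion}).

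The argument is essentially routine once the split at $a_0$ is made. The only point requiring attention is the uniformity of the $O(\gamma)$ remainders as $\gamma \to 0$; this is immediate on $[0,a_0]$ from the $\gamma$-independent lower bound $E - V_L > 0$ on the denominator, and on $[a_0,a]$ it is handled by the closed-form evaluation. The decisive use of (P4) is that without it the outer integral would not admit a closed form, and the clean separation of the singular $\gamma^{-1/2}$ leading term from the finite correction $\lambda(E)$ would require considerably more delicate analysis (as mentioned in the introduction regarding potentials with exponential tails).
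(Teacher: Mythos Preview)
Your proof is correct and follows essentially the same route as the paper: identify the turning points $\pm a=(E-V_\infty)^{1/2}\gamma^{-1/2}$ from (P4), split the period quadrature at $a_0$, evaluate the outer piece in closed form via $\arcsin$, and Taylor-expand the inner piece using the uniform lower bound $E-V(x)\ge E-V_L>0$. The only cosmetic difference is that you make the threshold $\gamma_0=(E-V_L)/a_0^2$ explicit (as the paper does earlier in Section~2 but not in the proof itself); note that this choice also guarantees $V_\gamma(x)\le V_L+\gamma a_0^2<E$ on $[0,a_0]$, which is what rules out any additional positive root there.
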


\begin{proof}
Thanks to assumptions (P1) and (P3), for a fixed $E > V_L$ there exists $\gamma_0(E) > 0$
such that for any $\gamma \in (0,\gamma_0)$
equation $V_{\gamma}(x) = V(x) + \gamma x^2 = E$ admits only two solutions
$x = \pm a$ with $a > 0$ such that $V_{\gamma}'(x) > 0$ for all $x \geq a$. The two
periodic solutions with symmetries (\ref{symmetry}) are constructed from the same bounded trajectory on
the phase plane $(x,\dot{x})$ departing from either the point $(x(0),\dot{x}(0)) = (a,0)$
or the point $(x(0),\dot{x}(0)) = (-a,0)$.

Since $E = \gamma a^2 + V(a)$, it follows from assumption (P4) that
$$
a = (E - V_{\infty})^{1/2} \gamma^{-1/2} \quad \mbox{\rm as} \quad \gamma \to 0.
$$

Asymptotic expansion of the period $T$ of the periodic solution of (\ref{newton-particle})
is found from the exact formula
\begin{equation}
\label{formula-1}
T = \sqrt{2} \int_{-a}^{a} \frac{dx}{\sqrt{E -
V_{\gamma}(x)}} = 2 \sqrt{2} \int_{0}^{a} \frac{dx}{\sqrt{E -
\gamma x^2 - V(x)}}.
\end{equation}
Thanks to assumption (P4), we know that $V(x) = V_{\infty}$ for all
$x \in [a_0,a]$, so that
$$
T = \frac{2 \sqrt{2}}{\sqrt{\gamma}} \left( \frac{\pi}{2} - \theta_0 \right) +
2 \sqrt{2} \int_{0}^{a_0} \frac{dx}{\sqrt{E -
\gamma x^2 - V(x)}},
$$
where $\theta_0$ is the smallest positive root of $a \sin(\theta) = a_0$ satisfying
$$
\theta_0 = \arcsin\left( \frac{a_0 \gamma^{1/2}}{(E - V_{\infty})^{1/2}} \right) = \frac{a_0 \gamma^{1/2}}{(E - V_{\infty})^{1/2}} + {\cal O}(\gamma^{3/2}) \quad \mbox{\rm as} \quad \gamma \to 0.
$$
Since $E > V_L$ is fixed,
there is $C > 0$ such that $E - V(x) \geq C$ for all $x \in [0,a_0]$. As a result,
the asymptotic expansion
$$
\int_{0}^{a_0} \frac{dx}{\sqrt{E -
\gamma x^2 - V(x)}} = \int_{0}^{a_0} \frac{dx}{(E - V(x))^{1/2}} + {\cal O}(\gamma)
\quad \mbox{\rm as} \quad \gamma \to 0,
$$
concludes the proof of the asymptotic expansion (\ref{period-expansion}).
\end{proof}

Let us represent the solution of (\ref{newton-particle}) for $E > V_L$ in the form
\begin{equation}
\label{scaling-transformation}
x(t) = \frac{X(\tau)}{\gamma^{1/2}} , \quad \tau = \gamma^{1/2} t.
\end{equation}
By Lemma \ref{lemma-oscillations}, we have $\| X \|_{L^{\infty}}
= {\cal O}(1)$ and $T_0 := \gamma^{1/2} T = {\cal O}(1)$
as $\gamma \to 0$ with precise value $\| X \|_{L^{\infty}} = (E - V_{\infty})^{1/2}$ and
the asymptotic expansion
\begin{equation}
\label{period-expansion-T0} T_0 = \sqrt{2} \pi
+ 2 \sqrt{2} \gamma^{1/2} \left( \int_0^{a_0} \frac{dx}{(E - V(x))^{1/2}} - \frac{a_0}{(E - V_{\infty})^{1/2}} \right)
+ {\cal O}(\gamma^{3/2}) \quad \mbox{\rm as} \quad \gamma \to 0.
\end{equation}
We shall now derive a series of estimates that will be useful for the proof of Theorem \ref{theorem-main}.

\begin{corollary}
Let $X(\tau)$ be the $T_0$-periodic function defined by
the solution of Lemma \ref{lemma-oscillations}
in parametrization (\ref{scaling-transformation}) for any fixed $E > V_L$.
Then, $X \in C^3_{\rm per}(0,T_0)$
and for sufficiently small $\gamma > 0$,
there exists $C(E) > 0$ such that $\| X \|_{H^1_{\rm per}} \leq C(E)$.
Moreover, $\| X \|_{C^1} \leq (1+\sqrt{2})\, \sqrt{E}$.
\label{lemma-H2-solution}
\end{corollary}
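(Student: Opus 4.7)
The plan is to work directly from the energy first integral in the rescaled variable and read off both the $C^1$ bound and the $H^1$ bound as essentially pointwise consequences of assumption (P3); regularity will then follow by a standard bootstrap on the oscillator ODE. The starting point is the rescaling $x(t)=X(\tau)/\gamma^{1/2}$, $\tau=\gamma^{1/2} t$, under which the conservation law $\tfrac12\dot x^2+V(x)+\gamma x^2=E$ takes the $\gamma$-independent form
\begin{equation*}
\tfrac12\bigl(X'(\tau)\bigr)^2 + X(\tau)^2 + V\!\bigl(X(\tau)/\gamma^{1/2}\bigr) = E.
\end{equation*}
This identity will drive everything.

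First I would use assumption (P3), namely $V\geq 0$, to drop the potential term from the conserved quantity. Since the three summands are nonnegative, I obtain the pointwise bounds $X(\tau)^2\leq E$ and $\tfrac12(X'(\tau))^2\leq E$ for every $\tau\in\mathbb R$. This immediately yields $\|X\|_{C^0}\leq\sqrt E$ and $\|X'\|_{C^0}\leq\sqrt{2E}$, hence the claim $\|X\|_{C^1}\leq (1+\sqrt 2)\sqrt E$. For the Sobolev bound, I integrate the pointwise inequality $|X|^2+|X'|^2\leq 3E$ over one period to get $\|X\|_{H^1_{\rm per}}^2\leq 3E\,T_0$, and then invoke the asymptotic expansion (\ref{period-expansion-T0}) of Lemma \ref{lemma-oscillations} to conclude that $T_0$ stays bounded (by $\sqrt 2\pi+1$, say) for all sufficiently small $\gamma>0$, which produces a constant $C(E)$ depending only on $E$ as required.

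For the $C^3$ regularity I would bootstrap using the rescaled equation $X''(\tau)=-\gamma^{-1/2}V'\bigl(X(\tau)/\gamma^{1/2}\bigr)-2X(\tau)$. Assumption (P1) gives $V\in C^2(\mathbb R)$, so the right-hand side is a $C^1$ function of $X$; since $X\in C^2$ as the classical solution of a second-order ODE with $C^1$ vector field, the composition $\tau\mapsto V'(X(\tau)/\gamma^{1/2})$ is $C^1$ in $\tau$, and therefore $X''\in C^1$, i.e.\ $X\in C^3$. Periodicity of all derivatives comes from the $T_0$-periodicity of the trajectory established in Lemma \ref{lemma-oscillations}, and from the fact that the ODE is autonomous, so all derivatives at $\tau=0$ agree with those at $\tau=T_0$.

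There is essentially no obstacle here: the only delicate point is verifying that $T_0$ stays uniformly bounded as $\gamma\to 0$, which is precisely what the expansion (\ref{period-expansion-T0}) guarantees. Everything else is a direct extraction from the energy identity and a one-step bootstrap in the ODE; no fixed-point argument or estimate involving the compact support of $V'$ is needed at this stage.
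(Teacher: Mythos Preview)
Your proof is correct and follows essentially the same approach as the paper: the $C^1$ and $H^1$ bounds are read off from the rescaled energy identity using $V\geq 0$ and $T_0={\cal O}(1)$, and the $C^3$ regularity is obtained by bootstrapping in the rescaled ODE using $V\in C^2$ from (P1). The paper is more terse (it states the regularity conclusion without spelling out the bootstrap), but the ideas are identical.
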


\begin{proof}
We recall that $\| X \|_{L^{\infty}} = (E - V_{\infty})^{1/2} \leq \sqrt{E}$.
Let us rewrite the energy conservation (\ref{newton-particle}) in
parametrization (\ref{scaling-transformation}):
\begin{equation}
\label{energy-invariant}
\frac{1}{2} \dot{X}^2 + X^2 + V(\gamma^{-1/2} X) = E.
\end{equation}
Since $V\geq 0$ we have $\| \dot{X} \|_{L^{\infty}} \leq \sqrt{2E}$, which
gives the bound on $\|X \|_{C^1}$.
This also gives the uniform bound on $\|X \|_{H^1_{\rm per}}$
since $T_0 = {\cal O}(1)$ as $\gamma \to 0$.

Let us also rewrite the second-order equation (\ref{newton-particle})
in parametrization (\ref{scaling-transformation}):
\begin{equation}
\label{second-order-ode}
\ddot{X}(\tau) + 2 X(\tau) + \gamma^{-1/2} V'(\gamma^{-1/2} X(\tau)) = 0.
\end{equation}
Thanks to assumption (P1),
the solution $X(\tau)$ is actually in $C^3_{\rm per}(0,T_0)$.
\end{proof}

The potential term of the nonlinear equation (\ref{second-order-ode})
is a singular contribution to the linear equation
as $\gamma \to 0$. Because of the singular contribution, $\| \ddot{X} \|_{L^{\infty}}$
grows as $\gamma \to 0$. Nevertheless, thanks to assumption (P4) of the compact support
of $V'(x)$, the solution $X(\tau)$ stays in the domain $|X| \geq a_0 \gamma^{1/2}$,
where $V'(\gamma^{-1/2} X) = 0$ for most of the times $\tau$ in the period $[0,T_0]$.
The following lemma estimates the size of the time interval, for which
the solution stays in the domain $|X| \leq a_0 \gamma^{1/2}$.

\begin{lemma}
\label{estimdeltat0}
Let $X(\tau)$ be the same as in Corollary \ref{lemma-H2-solution}.
Let $\Delta T_0$ be the measure of the subset of
$[0,T_0]$ in which $|X(\tau)| \leq a_0 \gamma^{1/2}$.
Then, $\Delta T_0$ admits the asymptotic expansion
\begin{equation}
\label{bounds-on-Delta-T} \Delta T_0 = 2 \sqrt{2} \gamma^{1/2} \int_0^{a_0} \frac{dx}{(E - V(x))^{1/2}} +
{\cal O}(\gamma^{3/2}) \quad \mbox{\rm as} \quad \gamma \to 0.
\end{equation}
\label{lemma-bounds-on-Delta-T}
\end{lemma}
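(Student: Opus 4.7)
The plan is to compute the measure $\Delta T_0$ by working in the original time variable $t$, exploiting the symmetries (\ref{symmetry}) and the explicit form of the potential outside $[-a_0, a_0]$, and then rescaling by $\gamma^{1/2}$.

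First, I would use the symmetries of $x(t)$ to describe the subset $\{t \in [0,T] : |x(t)| \leq a_0\}$. Since $x$ is monotonically decreasing on $[0, T/4]$ from $a$ to $0$, there is a unique time $t_1 \in (0, T/4)$ at which $x(t_1) = a_0$. The symmetry $x(T/2 - t) = -x(t)$ then gives $x(T/2 - t_1) = -a_0$, and the reflection symmetries imply that $|x(t)| \leq a_0$ on the two intervals $[t_1, T/2 - t_1]$ and $[T/2 + t_1, T - t_1]$. Hence the total Lebesgue measure in $t$-units is $\Delta T = T - 4 t_1$, and by the scaling $\tau = \gamma^{1/2} t$ we have $\Delta T_0 = \gamma^{1/2} (T - 4 t_1)$.

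Next, I would compute $t_1$ using energy conservation. Writing $\dot{x} = -\sqrt{2(E - V_\gamma(x))}$ on the descending branch gives
\[
t_1 = \int_{a_0}^{a} \frac{dx}{\sqrt{2(E - V(x) - \gamma x^2)}}.
\]
By assumption (P4), $V(x) = V_\infty$ for $x \in [a_0, a]$, so the integrand simplifies and the substitution $x = (E-V_\infty)^{1/2}\gamma^{-1/2}\sin\theta$ turns the integral into
\[
t_1 = \frac{1}{\sqrt{2\gamma}}\left(\frac{\pi}{2} - \theta_0\right),
\quad \theta_0 = \arcsin\left(\frac{a_0\gamma^{1/2}}{(E-V_\infty)^{1/2}}\right),
\]
which is exactly the quantity already handled in the proof of Lemma \ref{lemma-oscillations}. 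Expanding the arcsin yields $\theta_0 = a_0 \gamma^{1/2}(E-V_\infty)^{-1/2} + {\cal O}(\gamma^{3/2})$, so
\[
4 t_1 = \frac{\sqrt{2}\pi}{\gamma^{1/2}} - \frac{2\sqrt{2}\, a_0}{(E-V_\infty)^{1/2}} + {\cal O}(\gamma).
\]

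Finally I would combine this with the period expansion (\ref{period-expansion}) from Lemma \ref{lemma-oscillations}: the singular $\sqrt{2}\pi/\gamma^{1/2}$ terms cancel, and what remains is
\[
\Delta T = T - 4 t_1 = 2\sqrt{2}\int_0^{a_0}\frac{dx}{(E-V(x))^{1/2}} - \frac{2\sqrt{2}\,a_0}{(E-V_\infty)^{1/2}} + \frac{2\sqrt{2}\,a_0}{(E-V_\infty)^{1/2}} + {\cal O}(\gamma),
\]
so $\Delta T = 2\sqrt{2}\int_0^{a_0} (E-V(x))^{-1/2} dx + {\cal O}(\gamma)$. Multiplying by $\gamma^{1/2}$ gives the claimed expansion (\ref{bounds-on-Delta-T}).

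There is no real obstacle here beyond bookkeeping: the computation is a direct consequence of what was already done for the period $T$, with the key observation being that the contribution outside the support of $V'$ is computed exactly via the arcsin substitution, so that the singular $\gamma^{-1/2}$ terms in $T$ and in $4 t_1$ cancel precisely and the subtracted $(E-V_\infty)^{-1/2}$-term in $\lambda(E)$ is restored by the expansion of $\theta_0$. The only point demanding a little care is confirming that the error estimate ${\cal O}(\gamma^{3/2})$ coming from $\arcsin$ translates, after multiplication by the prefactor $\gamma^{-1/2}$, into an ${\cal O}(\gamma)$ error for $\Delta T$ and therefore an ${\cal O}(\gamma^{3/2})$ error for $\Delta T_0$.
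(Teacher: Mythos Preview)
Your proof is correct and follows essentially the same approach as the paper. The only cosmetic difference is that the paper works directly in the rescaled variable $\tau$ and reads off the matching time from the explicit cosine solution $X(\tau)=(E-V_\infty)^{1/2}\cos(\sqrt{2}\tau)$, whereas you work in the original variable $t$ and compute the same quantity via the time-of-flight integral and the $\arcsin$ substitution; the two computations are equivalent and lead to the same cancellations.
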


\begin{proof}
Consider the splitting of $[0,T_0]$ into
\begin{eqnarray}
\nonumber
\left[ 0,\frac{1}{4}(T_0 - \Delta T_0) \right] \cup
\left[ \frac{1}{4}(T_0 - \Delta T_0),\frac{1}{4}(T_0 + \Delta T_0) \right] \cup
\left[ \frac{1}{4}(T_0 + \Delta T_0),\frac{1}{4}(3 T_0 - \Delta T_0) \right] \\
\phantom{textttext} \cup \left[ \frac{1}{4}(3 T_0 - \Delta T_0),\frac{1}{4}(3 T_0 + \Delta T_0) \right] \cup
\left[ \frac{1}{4}(3 T_0 + \Delta T_0),T_0 \right]. \label{splitting-interval}
\end{eqnarray}
Thanks to the symmetries (\ref{symmetry}), we have
$$
X\left(\frac{1}{4}T_0\right) = X\left(\frac{3}{4}T_0\right) = 0.
$$
In the first, third, and fifth intervals, the second-order equation (\ref{second-order-ode})
for sufficiently small $\gamma > 0$ becomes the linear oscillator
$$
\ddot{X} + 2 X = 0.
$$
An explicit solution with $X(0) = (E - V_{\infty})^{1/2}$ and $\dot{X}(0) = 0$
has the form
\begin{equation}
\label{unperturbed-solution} X(\tau) = (E - V_{\infty})^{1/2}
\cos(\sqrt{2}\tau), \quad \tau \in \left[ -\frac{1}{4}(T_0 -
\Delta T_0),\frac{1}{4}(T_0 - \Delta T_0)\right].
\end{equation}
The matching condition $X(\tau_0) = a_0 \gamma^{1/2}$ at $\tau_0 = \frac{1}{4} (T_0 - \Delta T_0)$ gives
$$
\cos\left( \frac{T_0 - \Delta T_0}{2 \sqrt{2}} \right) = \frac{a_0 \gamma^{1/2}}{(E - V_{\infty})^{1/2}}.
$$
Using the asymptotic expansion (\ref{period-expansion-T0}) for $T_0$, we obtain
the asymptotic expansion (\ref{bounds-on-Delta-T}) for $\Delta T_0$.
\end{proof}

\begin{corollary}
Let $X(\tau)$ be the same as in Lemma \ref{lemma-bounds-on-Delta-T}
and
\begin{equation}
\label{variable-Y}
Y(\tau) := \gamma^{-1/2} V'(\gamma^{-1/2} X(\tau)).
\end{equation}
Then, $Y \in C^1_{\rm per}(0,T_0)$ and
there exists $C(E) > 0$ such that $\| Y \|_{H^1_{\rm per}} \leq C
\gamma^{-3/4}$. \label{lemma-H4-solution}
\end{corollary}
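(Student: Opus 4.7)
The claim has two parts: regularity and the quantitative bound. For regularity, since $V \in C^2(\R)$ implies $V' \in C^1(\R)$, and since Corollary \ref{lemma-H2-solution} gives $X \in C^3_{\rm per}(0,T_0)$, the chain rule shows $Y \in C^1_{\rm per}(0,T_0)$ directly. So I would devote the rest of the proof to the $H^1$ bound.

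The central idea is that although the prefactor $\gamma^{-1/2}$ looks bad, the integrand in $\|Y\|_{L^2}^2$ and $\|\dot Y\|_{L^2}^2$ is supported only on the small time set where $|X(\tau)| \leq a_0 \gamma^{1/2}$, whose measure $\Delta T_0 = O(\gamma^{1/2})$ was computed in Lemma \ref{estimdeltat0}. Combining these two effects is exactly what produces the claimed $\gamma^{-3/4}$ scaling. First I would bound $\|Y\|_{L^2}^2$. By the compact support assumption (P4), $V'(\gamma^{-1/2}X(\tau))$ vanishes outside the set $S_\gamma := \{\tau \in [0,T_0] : |X(\tau)| \leq a_0 \gamma^{1/2}\}$, and since $V' \in C^1$ has compact support, $\|V'\|_{L^\infty} < \infty$. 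Hence
$$
\|Y\|_{L^2}^2 \leq \gamma^{-1} \|V'\|_{L^\infty}^2 \, |S_\gamma| \leq C \gamma^{-1} \cdot \gamma^{1/2} = C \gamma^{-1/2},
$$
using Lemma \ref{estimdeltat0}.

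Next I would compute $\dot Y$. The chain rule yields
$$
\dot Y(\tau) = \gamma^{-1}\, V''(\gamma^{-1/2} X(\tau))\, \dot X(\tau).
$$
Since $V \in C^2(\R)$ and $V'$ has support in $[-a_0,a_0]$, the function $V''$ is continuous with the same compact support, hence bounded. Together with the uniform bound $\|\dot X\|_{L^\infty} \leq \sqrt{2E}$ from Corollary \ref{lemma-H2-solution}, I obtain
$$
\|\dot Y\|_{L^2}^2 \leq \gamma^{-2} \|V''\|_{L^\infty}^2 \|\dot X\|_{L^\infty}^2 \, |S_\gamma| \leq C \gamma^{-2} \cdot \gamma^{1/2} = C \gamma^{-3/2}.
$$
Summing the two estimates gives $\|Y\|_{H^1_{\rm per}}^2 \leq C(\gamma^{-1/2} + \gamma^{-3/2}) \leq C \gamma^{-3/2}$ for small $\gamma$, so $\|Y\|_{H^1_{\rm per}} \leq C \gamma^{-3/4}$ as claimed.

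The only subtle point — and the one I would flag as the main obstacle — is properly justifying that $|S_\gamma| = \Delta T_0$, i.e.\ that the set where the integrand is nonzero is really the one already controlled in Lemma \ref{estimdeltat0}. This is immediate since $V'$ is supported on $[-a_0,a_0]$, so $V'(\gamma^{-1/2}X(\tau)) \neq 0$ forces $|\gamma^{-1/2} X(\tau)| \leq a_0$. Everything else is chain rule plus measuring the time spent in the inner region. The scaling $\gamma^{-3/4}$ is sharp precisely because the derivative contributes $\gamma^{-2}$ while the time measure contributes only $\gamma^{1/2}$.
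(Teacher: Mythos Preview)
Your argument is correct and matches the paper's proof essentially line for line: both use the compact support of $V'$ (and hence of $V''$) to restrict the $L^2$ integrals to the set of measure $\Delta T_0 = {\cal O}(\gamma^{1/2})$, combine this with the $L^\infty$ bounds on $V'$, $V''$, and $\dot X$, and read off $\|Y\|_{L^2}^2 \leq C\gamma^{-1/2}$ and $\|\dot Y\|_{L^2}^2 \leq C\gamma^{-3/2}$. Your added remarks on the $C^1$ regularity and on why the support of the integrand coincides with $S_\gamma$ are correct and slightly more explicit than the paper's version.
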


\begin{proof}
Using the bounds on $\Delta T_0$ in Lemma \ref{lemma-bounds-on-Delta-T},
Corollary \ref{lemma-H2-solution}, and assumption (P1) on the potential $V(x)$,
we obtain
\begin{eqnarray*}
\int_0^{T_0} |Y(\tau)|^2 d \tau & = & \gamma^{-1} \int_0^{T_0} |V'(\gamma^{-1/2} X(\tau))|^2 d\tau \leq
\gamma^{-1} \Delta T_0 \| V' \|_{L^{\infty}}^2 \leq C_1 \gamma^{-1/2},\\
\int_0^{T_0} |\dot{Y}(\tau)|^2 d \tau & = & \gamma^{-2} \int_0^{T_0} |\dot{X}(\tau)|^2
|V''(\gamma^{-1/2} X(\tau))|^2 d\tau \leq
\gamma^{-2} \Delta T_0 \| V'' \|_{L^{\infty}}^2 \| \dot{X} \|^2_{L^{\infty}} \leq C_2 \gamma^{-3/2},
\end{eqnarray*}
for some constants $C_1,C_2 > 0$. The bound on $\| Y \|_{H^1_{\rm per}}$ follows from
the above computation.
\end{proof}

We shall be working in the space of functions in $H^2_{\rm per}(0,T_0)$, $H^1_{\rm per}(0,T_0)$, and
$L^2_{\rm per}(0,T_0)$ satisfying symmetry (\ref{symmetry}). Therefore, let us denote for all $p\geq 1$
\begin{equation}
\label{spaces}
H^p_e := \left\{ X \in H^p_{\rm per}(0,T_0) : \; X(-\tau) = X(\tau) = -X\left( \frac{T_0}{2} - \tau \right),
\quad \tau \in \R \right\}
\end{equation}
and use similar notations for $L^2_e$ and $L^{\infty}_e$.

We are now prepared to deal with the singularly perturbed linear oscillator under the
small source term:
\begin{equation}
\label{source-perturbation} \ddot{Z}(\tau) + 2 Z(\tau) +
\gamma^{-1/2} V'(\gamma^{-1/2} Z(\tau)) = \gamma^{\varepsilon+1/2} F(\tau),
\end{equation}
where $\varepsilon > 0$, $F \in L^2_e$, and $\| F \|_{L^2_{\rm per}} = {\cal O}(1)$ as $\gamma \to 0$.
It is necessary to consider the inhomogeneous problem (\ref{source-perturbation})
in order to control the effect of small coupling in the discrete Klein--Gordon equation (\ref{KGlattice})
at the central site $n = 0$. Energy for the perturbed oscillator equation
(\ref{source-perturbation}) can be written in the form
\begin{equation}
\label{energy-1} H(\tau) = \frac{1}{2} \dot{Z}^2 + Z^2 +
V(\gamma^{-1/2} Z).
\end{equation}

Because the homogeneous equation with $F \equiv 0$ admits a $T_0$-periodic
solution $X \in H^2_e$ with $\partial_E T_0(E,\gamma) = {\cal O}(\gamma^{1/2})$
(equation (\ref{source-perturbation}) is a singular perturbation of a
linear oscillator),
a source term
of order one would generate
a large output as $\gamma \to 0$, i.e.
the output $\| Z - X \|_{H^1_{\rm per}}$ is
going to be $\gamma^{-1/2}$ larger than the source term, roughly speaking.
This can be intuitively understood by linearizing equation (\ref{source-perturbation}) around $X$.
Indeed, the linearized operator
\begin{equation}
\label{operator-L-tau} L_0 := \frac{d^2}{d \tau^2} + 2 +
\gamma^{-1} V''\left(\gamma^{-1/2} X(\tau) \right)
\end{equation}
admits a nontrivial kernel ${\rm Ker}(L_0) = {\rm span}\{ \dot{X}
\}$ in $H^2_{\rm per}(0,T_0)$ and ${\rm Ker}(L_0) = \{0 \}$
in the subspace of even $T_0$-periodic functions, under the condition
$\partial_E T_0(E,\gamma) \neq 0$. In this subspace, we have
$\| L_0^{-1}   \|_{\mathcal{L}(L^2_{\rm per}, H^1_{\rm per} )}={\cal O}(\gamma^{-1/2})$,
due to the fact that
$\partial_E T_0(E,\gamma)= {\cal O}(\gamma^{1/2})$
(this follows from standard computations based on the variation of constants method).
Thanks to the scaling of the source term
considered in (\ref{source-perturbation}), we will get
$\| Z - X \|_{H^1_{\rm per}} = {\cal O}(\gamma^{\varepsilon})$ in Lemma \ref{lemma-spectrum},
if $\| F \|_{L^2_{\rm per}} = {\cal O}(1)$ and $\varepsilon \leq \frac{1}{4}$. Since $\| X
\|_{H^1_{\rm per}} = {\cal O}(1)$ as $\gamma \to 0$ by Corollary
\ref{lemma-H2-solution}, the perturbation $\| Z - X
\|_{H^1_{\rm per}}$ will be still smaller than the unperturbed solution $\| X
\|_{H^1_{\rm per}}$.

There exists an obstacle on the direct application of the Implicit Function
Theorem to obtain $T_0$-periodic solutions of the singularly perturbed oscillator equation (\ref{source-perturbation}).
The obstacle comes from the power series expansion
$$
V'(\gamma^{-1/2} Z) = V'(\gamma^{-1/2} X) + \gamma^{-1/2} V''(\gamma^{-1/2} X) (Z - X) +
\gamma^{-1} V'''(\gamma^{-1/2} X) (Z - X)^2 + ...,
$$
which generate large terms for $\gamma \to 0$ because of the singular perturbation in the
nonlinear potential. To avoid this difficulty, we use the fact that
$V'(\gamma^{-1/2}Z)$ has a compact support,
and transform the search of periodic solutions to
a root finding problem to which the Implicit
Function Theorem can be applied.
We shall prove the following.

\begin{lemma}
\label{lemma-spectrum} Let $X \in H^2_e$ be the
solution of Lemma \ref{lemma-oscillations} in parametrization
(\ref{scaling-transformation}) for any fixed $E > V_L$. Let $B_{\delta}$ be a ball of
radius $\delta > 0$ in $L^2_e$ centered at $0$.
Fix $\varepsilon \in \left( 0, \frac{1}{2} \right)$. If
\begin{equation}
\label{condition-inversion} \int_0^{a_0} \frac{dx}{(E -
V(x))^{3/2}} - \frac{a_0}{(E - V_{\infty})^{3/2}} \neq 0,
\end{equation}
there exist $\gamma_0(\varepsilon,\delta,E) > 0$ and $\eta (\varepsilon,\delta,E) > 0$
such that the inhomogeneous equation (\ref{source-perturbation}) with $\gamma \in (0,\gamma_0)$
and $F \in B_{\delta}$ admits a unique solution $Z = \mathcal{G}_{\gamma ,\varepsilon} (F) \in H^2_e$
satisfying
$$
Z(\tau_0)=a_0 \gamma^{1/2}, \ \ \ |H(\tau_0)-E|<\eta
$$
for some $\tau_0 \in (0, \frac{T_0}{4})$.
Moreover, $Z$ is close to $X$ in $H^1_e$
with
\begin{equation}
\label{bound-L2-F}
\| Z - X \|_{H^1_{\rm per}} \leq
C_0(\varepsilon,\delta,E) \gamma^{\tilde{\varepsilon}}, \quad
\tilde{\varepsilon} = \min\left(\frac{1}{4},\varepsilon \right)
\end{equation}
and satisfies the estimate
\begin{equation}
\label{boundC1}
\| Z \|_{C^1_{\rm per} } \leq 3 \sqrt{E}.
\end{equation}
In addition, $Z(\tau ) \geq a_0 {\gamma}^{1/2}$ for $\tau \in [0,\tau_0]$,
$Z(\tau ) \in [0,a_0 {\gamma}^{1/2}]$ for $\tau \in [\tau_0, \frac{T_0}{4}]$,
and there exists $\theta(\varepsilon,\delta,E) >0$ such that
\begin{equation}
\label{deltat0}
\left| \frac{T_0}{4}-\tau_0 \right| \leq \theta(\varepsilon,\delta,E) \gamma^{\frac{1}{2}}.
\end{equation}
\end{lemma}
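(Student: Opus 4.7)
The key observation is that by hypothesis (P4), $V'(\gamma^{-1/2} Z) = 0$ whenever $|Z| > a_0 \gamma^{1/2}$, so the singular coefficient is only ``active'' in a boundary-layer of time of length $O(\gamma^{1/2})$ (cf.\ Lemma \ref{lemma-bounds-on-Delta-T}). The plan is therefore to split the quarter-period $[0, T_0/4]$ into an \emph{outer} interval $[0, \tau_0]$, on which the equation reduces to the explicit linear oscillator
$$\ddot Z + 2 Z = \gamma^{\varepsilon+1/2} F,$$
and an \emph{inner} interval $[\tau_0, T_0/4]$ of length $O(\gamma^{1/2})$, on which the full nonlinear equation holds. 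On the outer piece one has closed-form solutions via variation of constants; on the inner piece a time rescaling $\tau = \tau_0 + \gamma^{1/2}\sigma$ turns the equation into a regular Cauchy problem whose flow depends smoothly on the data and on $\gamma$. This avoids the divergent Taylor expansion of $V'(\gamma^{-1/2}Z)$ around $X$ that was flagged in the text.

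\textbf{Setup.} I would parametrize by two unknowns: the amplitude $a = Z(0) > a_0\gamma^{1/2}$ and the switching time $\tau_0$. Imposing even symmetry at $0$ gives $\dot Z(0)=0$; the outer Duhamel formula reads
$$Z(\tau) = a\cos(\sqrt{2}\,\tau) + \tfrac{\gamma^{\varepsilon+1/2}}{\sqrt{2}}\int_0^\tau \sin(\sqrt{2}(\tau-s))F(s)\,ds,$$
which yields $Z(\tau_0)$ and the matching velocity $v_0 = \dot Z(\tau_0)$ as explicit smooth functions of $(a,\tau_0,\gamma,F)$. For the rescaled inner problem, with Cauchy data $(a_0\gamma^{1/2},\,\gamma^{1/2}v_0)$ of size $O(\gamma^{1/2})$, a standard contraction (forcing is of size $\gamma^{\varepsilon+3/2}$, coefficients are bounded uniformly in $\gamma$) produces a unique solution on an interval of rescaled length $O(1)$ and the exit time $\sigma_\ast$ at which $Z=0$ as a $C^1$ function of the data; this already yields (\ref{deltat0}) and the monotonicity $Z \in [0,a_0\gamma^{1/2}]$ on $[\tau_0,T_0/4]$ near the unperturbed solution. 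The problem then closes by imposing the two equations
$$\Phi(a,\tau_0;F,\gamma) := \bigl(Z(\tau_0)-a_0\gamma^{1/2},\ \tau_0 + \gamma^{1/2}\sigma_\ast(a,\tau_0,\gamma,F) - T_0/4\bigr)=0,$$
the second being the anti-symmetry $Z(T_0/4)=0$; $Z$ is then extended to $[0,T_0]$ by the symmetries (\ref{symmetry}), which automatically produce an $H^2_e$ periodic solution.

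\textbf{Inversion and estimates.} I would apply the implicit function theorem at $F=0$, $(a,\tau_0) = (a_X,\tau_{0,X})$ given by Lemma \ref{lemma-oscillations}. The Jacobian $D_{(a,\tau_0)}\Phi|_{F=0}$ is triangular in leading order in $a$ and $\tau_0$, and its determinant is controlled by the derivative of the quarter-period with respect to energy, which by (\ref{period-expansion-T0}) equals $\tfrac{1}{4}\gamma^{1/2}\lambda'(E) + O(\gamma^{3/2})$. Hypothesis (\ref{condition-inversion}) is exactly $\lambda'(E)\neq 0$, and it yields $\|D\Phi^{-1}\| = O(\gamma^{-1/2})$. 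Since the $F$-dependence of $\Phi$ is of size $\gamma^{\varepsilon+1/2}\|F\|_{L^2}$, IFT gives a unique solution $(a,\tau_0)$ with $|a-a_X| + |\tau_0-\tau_{0,X}| = O(\gamma^{\varepsilon})$ for $\gamma$ sufficiently small and $F\in B_\delta$. Plugging back, $\|Z-X\|_{H^1}$ on the outer interval is $O(\gamma^{\varepsilon})$ from the amplitude perturbation plus $O(\gamma^{\varepsilon+1/2})$ from the Duhamel term; on the inner interval of length $O(\gamma^{1/2})$, each of $\|Z\|_{H^1}$ and $\|X\|_{H^1}$ is $O(\gamma^{1/4})$, giving an additional $O(\gamma^{1/4})$ contribution, hence (\ref{bound-L2-F}) with $\tilde\varepsilon = \min(1/4,\varepsilon)$. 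The bound (\ref{boundC1}) follows by controlling the energy
$$H(\tau) = H(0) + \gamma^{\varepsilon+1/2}\int_0^\tau \dot Z(s)\,F(s)\,ds$$
via Gr\"onwall and Cauchy--Schwarz: for $\gamma$ small enough $|H-E|\leq\eta$, so $\tfrac{1}{2}\dot Z^2 \leq H + V_L \leq E + \eta + V_L$, giving $\|\dot Z\|_\infty \leq \sqrt{2E}(1+o(1))$ and $\|Z\|_\infty \leq \sqrt E(1+o(1))$, comfortably below $3\sqrt E$.

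\textbf{Main obstacle.} The delicate point is the quantitative inversion of $D\Phi$: the outer linear oscillator has period exactly $\sqrt{2}\pi$ and is amplitude-independent, so without the inner-layer contribution the period map would be fully degenerate. It is precisely the $O(\gamma^{1/2}) \lambda'(E)$ correction coming from traversing the support of $V'$ that makes the Jacobian invertible, and matching the resulting $\gamma^{-1/2}$ blow-up of the inverse against the $\gamma^{\varepsilon+1/2}$ scaling of the source is what forces the restriction $\varepsilon < 1/2$ and produces the split rate $\tilde\varepsilon=\min(1/4,\varepsilon)$. Once this single competition is carried out carefully, all other estimates (sign of $Z$, monotonicity on $[0,\tau_0]$, $H^2$ regularity via the equation) follow by continuity from the unperturbed solution of Lemma \ref{lemma-oscillations}.
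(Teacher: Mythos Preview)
Your strategy coincides with the paper's almost point by point: the outer/inner splitting at the level set $|Z|=a_0\gamma^{1/2}$, the explicit Duhamel formula on the outer piece, shooting through the boundary layer, the quarter--period matching $Z(T_0/4)=0$, the identification of $\gamma^{1/2}\lambda'(E)$ as the quantity controlling invertibility, and the bookkeeping that produces $\tilde\varepsilon=\min(1/4,\varepsilon)$ are exactly what the paper does. The paper parametrizes by the energy $E^*=H(\tau_0)$ (equivalently the switching velocity $W_0=\dot Z(\tau_0)$) rather than by $(a,\tau_0)$, and treats the inner layer via the energy integral (\ref{derivative-formula})--(\ref{equation-5}) rather than a rescaled Cauchy problem, but these are cosmetic differences.

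There is, however, one step in your write-up that fails as stated: the two--variable implicit function theorem in $(a,\tau_0)$ does not deliver a $\gamma$--uniform neighbourhood $B_\delta$ in $F$. At the base point one has $\partial_{\tau_0}\Phi_1=\dot Z(\tau_0)=O(1)$ and the mixed second derivative $\partial_a\partial_{\tau_0}\Phi_1=-\sqrt2\sin(\sqrt2\,\tau_0)=O(1)$, so the Lipschitz constant of $D_{(a,\tau_0)}\Phi$ is of order one while $\|D\Phi^{-1}\|=O(\gamma^{-1/2})$. The Kantorovich--type bound then reads
\[
\|D\Phi^{-1}\|\cdot\|D^2\Phi\|\cdot\|D\Phi^{-1}\Phi(a_X,\tau_{0,X};F)\|\;=\;O(\gamma^{-1/2})\cdot O(1)\cdot O(\gamma^{\varepsilon})\;=\;O(\gamma^{\varepsilon-1/2}),
\]
which diverges for $\varepsilon<1/2$, so the Newton iteration need not contract on the required ball of radius $O(\gamma^{\varepsilon})$. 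The paper sidesteps this by \emph{first} eliminating one variable: it solves the outer matching system (\ref{systzd}) for $(Z_0,\Delta_0)$ as explicit functions of $W_0$ (this is your equation $\Phi_1=0$, whose $\tau_0$--derivative is $O(1)$ so it inverts uniformly), and only then sets up a \emph{scalar} root--finding problem $\tau_*(E^*,\gamma,F)=T_0/4$. After dividing out the common factor $\gamma^{1/2}$, the reduced map $R(E^*,\gamma^{\varepsilon})$ of (\ref{rootexpression}) has $\partial_{E^*}R(E,0)\sim\lambda'(E)$ of order one and second derivative also of order one, so a direct contraction on $E^*-E=O(\gamma^{\varepsilon}+(E^*-E)^2)$ closes for any $\varepsilon>0$ and fixed $\delta$. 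Your argument goes through verbatim once you perform this one--variable reduction before inverting.
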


\begin{proof}
We shall use a kind of shooting method to transform the differential equation
(\ref{source-perturbation})
to a root-finding problem. Let us consider an initial-value problem for the second-order
differential equation (\ref{source-perturbation}) starting
with the initial data $Z(0) = Z_0$ and $\dot{Z}(0) = 0$, where $Z_0$ is a positive
$\gamma$-independent parameter.

Thanks to the compact support in assumption (P4), we
solve the inhomogeneous linear equation
\begin{equation}
\label{linear-oscillator-perturbed}
\ddot{Z}(\tau) + 2 Z(\tau) = \gamma^{\varepsilon +1/2} F(\tau), \quad \tau \in
[0,\tau_0],
\end{equation}
where $\tau_0$ will be determined by the condition $Z(\tau_0) = a_0 \gamma^{1/2}$.
We shall prove that for small $\gamma > 0$, a unique $\tau_0 \in \left( 0, \frac{\pi}{2\sqrt{2}} \right)$ exists.
In what follows we denote $W_0 := \dot{Z}(\tau_0)$ and $\tau_0 := \frac{\pi}{2 \sqrt{2}} - \Delta_0$.
We shall consider $W_0 \in \R_-$ as a free parameter and express $(Z_0 , \tau_0)$
(or equivalently $(Z_0 , \Delta_0)$) as a function of $W_0$.

The unique solution of the linear equation (\ref{linear-oscillator-perturbed})
with the initial data $Z(0) = Z_0$ and $\dot{Z}(0) = 0$ is given by
\begin{equation}
\label{solution-linear-oscillator-perturbed}
Z(\tau) = Z_0 \cos(\sqrt{2} \tau) + \frac{1}{\sqrt{2}} \gamma^{\varepsilon +1/2} \int_0^{\tau}
F(\tau') \sin(\sqrt{2}(\tau - \tau')) d \tau'.
\end{equation}
At $\tau = \tau_0$, we have the system of nonlinear equations
\begin{eqnarray*}
\left\{ \begin{array}{l} Z_0 \sin(\sqrt{2} \Delta_0) = a_0 \gamma^{1/2} - \frac{1}{\sqrt{2}}
\gamma^{\varepsilon +1/2} \int_0^{\tau_0}
F(\tau') \sin(\sqrt{2}(\tau_0 - \tau')) d \tau', \\
Z_0 \cos(\sqrt{2} \Delta_0) = -\frac{1}{\sqrt{2}} W_0 + \frac{1}{\sqrt{2}} \gamma^{\varepsilon +1/2} \int_0^{\tau_0} F(\tau') \cos(\sqrt{2}(\tau_0 - \tau')) d \tau', \end{array} \right.
\end{eqnarray*}
which can be rewritten in the equivalent form
\begin{eqnarray*}
\left\{ \begin{array}{l}
Z_0 = -\frac{1}{\sqrt{2}} W_0 \cos(\sqrt{2} \Delta_0) +
a_0 \gamma^{1/2} \sin(\sqrt{2} \Delta_0) + \frac{1}{\sqrt{2}} \gamma^{\varepsilon +1/2}
\int_0^{\tau_0} F(\tau) \sin(\sqrt{2} \tau) d \tau, \\
\frac{1}{\sqrt{2}} W_0  \sin(\sqrt{2} \Delta_0) = - a_0 \gamma^{1/2} \cos(\sqrt{2} \Delta_0)
- \frac{1}{\sqrt{2}} \gamma^{\varepsilon +1/2}
\int_0^{\tau_0} F(\tau) \cos(\sqrt{2} \tau) d \tau
 \end{array} \right.
\end{eqnarray*}
(multiply each equation by $\sin(\sqrt{2} \Delta_0)$ and $\cos(\sqrt{2} \Delta_0)$ and
sum the resulting equations). This problem can be rewritten
\begin{eqnarray}
\label{systzd}
\left\{ \begin{array}{l}
Z_0 = -\frac{1}{\sqrt{2}} W_0 \cos(\sqrt{2} \Delta_0) +
a_0 \gamma^{1/2} \sin(\sqrt{2} \Delta_0) + \frac{1}{\sqrt{2}} \gamma^{\varepsilon +1/2}
\int_0^{\frac{\pi}{2 \sqrt{2}} - \Delta_0} F(\tau) \sin(\sqrt{2} \tau) d \tau, \\
\Delta_0
=
- \frac{a_0}{W_0\, \mbox{sinc}(\sqrt{2}\Delta_0)} \gamma^{1/2} \cos(\sqrt{2} \Delta_0)
- \frac{1}{\sqrt{2} W_0 \, \mbox{sinc}(\sqrt{2}\Delta_0)} \gamma^{\varepsilon +1/2}
\int_0^{\frac{\pi}{2 \sqrt{2}} - \Delta_0} F(\tau) \cos(\sqrt{2} \tau) d \tau ,
 \end{array} \right.
\end{eqnarray}
where $\mbox{sinc(x)}=\sin{(x)}/x$.
Assuming that $W_0 \in (-\infty , W_{\rm{max}})$ is bounded away from $0$ and
$\gamma \in (0,\gamma_{\rm{max}})$ is small enough,
the second equation can be solved by the contraction mapping theorem
for $\Delta_0 \in \left[ 0, \frac{\pi}{2 \sqrt{2}} \right]$, and then the first equation
determines $Z_0$. This yields finally
\begin{equation}
\label{equation-1}
Z_0 = -\frac{1}{\sqrt{2}} W_0 + \frac{1}{\sqrt{2}} \gamma^{\varepsilon +1/2}
\int_0^{\frac{\pi}{2 \sqrt{2}}} F(\tau) \sin(\sqrt{2} \tau) d \tau + {\cal O}(\gamma)
\end{equation}
and
\begin{equation}
\label{equation-2}
\Delta_0 = -\frac{a_0 \gamma^{1/2}}{W_0} + \frac{1}{\sqrt{2} W_0} \gamma^{\varepsilon +1/2}
\int_0^{\frac{\pi}{2 \sqrt{2}}} F(\tau) \cos(\sqrt{2} \tau) d \tau + {\cal O}(\gamma^{3/2}).
\end{equation}
Then $Z_0 = {\cal O}(1)$ and $\Delta_0 = {\cal O}(\gamma^{1/2})$
as $\gamma \to 0$, so that existence of a unique $\tau_0 \in \left( 0, \frac{\pi}{2\sqrt{2}} \right)$ follows.

We can now continue solution $Z(\tau)$ to $\tau > \tau_0$ starting with the initial conditions $Z(\tau_0) = a_0 \gamma^{1/2}$ and $\dot{Z}(\tau_0) = W_0 < 0$. Our aim is to solve the inhomogeneous differential equation
\begin{equation}
\label{nonlinear-oscillator-perturbed}
\ddot{Z}(\tau) + 2 Z(\tau) + \gamma^{-1/2} V'(\gamma^{-1/2} Z(\tau)) = \gamma^{\varepsilon +1/2} F(\tau),
\quad \tau \in [\tau_0,\tau_*],
\end{equation}
where $\tau_* > \tau_0$ is the first time where $Z(\tau_*) = 0$.
For this purpose the first step is
to show that $\tau_* > \tau_0$ exists.

Taking the derivative of the energy $H$ in (\ref{energy-1}) with respect to $\tau$ and using equation (\ref{nonlinear-oscillator-perturbed}),
we infer that
\begin{equation}
\label{energy-1-der}
\dot{H} = \gamma^{\varepsilon +1/2} \dot{Z} F.
\end{equation}
Let $E^\ast := H(\tau_0 )=\frac{1}{2} W_0^2 + a_0^2 \gamma + V_{\infty}$
and assume that $E^\ast > V_L \geq V_{\infty}$. Note that $E^\ast$ becomes now
the parameter of the solution family in place of $W_0$.

Let us denote by $\tau_1$ the maximal time such that
$|Z(\tau)| < a_0 \gamma^{1/2}$ and $\dot{Z}(\tau)<0$
for all $\tau \in (\tau_0,\tau_1 )$.
We have
\begin{equation}
\label{derivative-formula}
\dot{Z}(\tau) = -\sqrt{2 (H(\tau) - V(\gamma^{-1/2} Z(\tau)) - Z^2(\tau))} , \quad \tau \in [\tau_0,\tau_1] .
\end{equation}
Using formula (\ref{derivative-formula}) and integrating
equation (\ref{energy-1-der}), we obtain for all $\tau  \in [\tau_0,\tau_1] $
\begin{eqnarray*}
|H(\tau) - E^\ast | \leq
\gamma^{\varepsilon +1/2} \int_{\tau_0}^{\tau} \sqrt{2 (H(\tau') - E^\ast + E^\ast - V(\gamma^{-1/2} Z(\tau'))
- Z^2(\tau'))} |F(\tau')| d \tau'.
\end{eqnarray*}
Using the triangle inequality, we find that there exist $\gamma_0 > 0$ and a $\gamma$-independent constant
$C(E^\ast,\varepsilon ) > 0$ such that for all $\gamma \in (0,\gamma_0)$,
\begin{eqnarray}
\label{control-H}
|H(\tau) - E^\ast | \leq C(E^\ast,\varepsilon ) \gamma^{\varepsilon +1/2}  \| F \|_{L^2_{\rm per}}, \quad \tau \in [\tau_0,\tau_1].
\end{eqnarray}
Now it follows that
$$
 \frac{1}{2} \dot{Z}^2 (\tau_1)=
H(\tau_1) - Z^2 (\tau_1) - V(\gamma^{-1/2} Z(\tau_1))
\geq E^\ast -V_L + {\cal O}(\gamma^{\varepsilon +1/2}),
$$
hence $\dot{Z}(\tau_1)\neq 0$ for $\gamma$ small enough since $ E^\ast >V_L$.
Consequently one has $Z(\tau_1) = -a_0 \gamma^{1/2}$ and
$\dot{Z}(\tau)<0$
for all $\tau \in [\tau_0,\tau_1 ]$.
By the intermediate value theorem,
this yields the existence of $\tau_* \in (\tau_0,\tau_1 )$ such that
$Z(\tau_*) = 0$ and $Z(\tau ) \in (0, a_0 \gamma^{1/2})$ for all
$\tau \in (\tau_0,\tau_* )$.

We now express the distance $|\tau_* - \tau_0|$ from the energy (\ref{energy-1})
controlled by bound (\ref{control-H}). From equations (\ref{energy-1-der}) and (\ref{derivative-formula}),
we obtain
\begin{eqnarray*}
\tau_* - \tau_0 = \int_{\tau_0}^{\tau_*} \frac{\dot{Z}(\tau)}{\dot{Z}(\tau)} d \tau =
-\int_{\tau_0}^{\tau_*} \frac{\dot{Z}(\tau) d \tau}{\sqrt{2 (H(\tau) - V(\gamma^{-1/2} Z(\tau)) - Z^2(\tau))}}.
\end{eqnarray*}
Using bound (\ref{control-H}), we have for all $\tau \in [\tau_0,\tau_*]$ as $\gamma \to 0$,
$$
\frac{1}{\sqrt{2 (H(\tau) - V(\gamma^{-1/2} Z(\tau)) - Z^2(\tau))}} =
\frac{1}{\sqrt{2 (E^\ast - V(\gamma^{-1/2} Z(\tau)) - Z^2(\tau))}} + {\cal O}(\gamma^{\varepsilon +1/2}),
$$
so that the change of variables $x = \gamma^{-1/2} Z(\tau)$ gives
\begin{eqnarray}
\label{equation-5}
\tau_* - \tau_0 = \gamma^{1/2} \int_{0}^{a_0} \frac{dx}{\sqrt{2 (E^\ast - V(x) - \gamma x^2)}}
+ {\cal O}(\gamma^{\varepsilon +1/2}) \quad \mbox{\rm as} \quad \gamma \to 0.
\end{eqnarray}
Combining this expansion of $\tau_* - \tau_0$ with the expansion of $\tau_0 = \frac{\pi}{2 \sqrt{2}} - \Delta_0$
obtained in (\ref{equation-2}), and using the fact that
$   W_0  =-\sqrt{2(E^\ast -  V_{\infty} - a_0^2 \gamma )} $, we end up with the expansion
\begin{eqnarray}
\label{root-period}
\tau_* (E^\ast, \gamma , F )=  \frac{\pi}{2 \sqrt{2}} - \frac{a_0 \gamma^{1/2}}{\sqrt{2(E^\ast - V_{\infty})}}
+ \gamma^{1/2} \int_{0}^{a_0} \frac{dx}{\sqrt{2 (E^\ast - V(x))}} + {\cal O}(\gamma^{\varepsilon +1/2})
\quad \mbox{\rm as} \quad \gamma \to 0,
\end{eqnarray}
uniformly in $F \in B_{\delta}$ for
fixed $\varepsilon >0$ and $\delta > 0$.

Let us now examine the regularity of $\tau_*$ with respect to the variable $(E^\ast, \gamma , F )$ including
the functional parameter $F$. From the standard fixed-point reformulation of the Cauchy problem for differential equations, it follows that the map $(W_0 , \gamma , F ) \mapsto Z$ is $C^1$ from
$(-\infty , W_{\rm{max}})\times (0,\gamma_{\rm{max}}) \times L^2_e$ into $C^1([0,\frac{\pi}{\sqrt{2}}])$.
Consequently, the map $(E^\ast , \gamma , F , \tau ) \mapsto Z(\tau )$ is $C^1$ from
$(E_{\rm{min}},+\infty  )\times (0,\gamma_{\rm{max}}) \times L^2_e \times \left(0,\frac{\pi}{\sqrt{2}}\right)$ into
$\R$, where we have denoted $E_{\rm{min}}=\frac{1}{2} W_{\rm{max}}^2 + V_{\infty}$.
Since ${Z}(\tau_\ast) = 0$ and $\dot{Z}(\tau_\ast) \neq 0$, the implicit function theorem ensures that
$\tau_\ast$ is $C^1$ with respect to $(E^\ast, \gamma , F ) \in
(E_{\rm{min}},+\infty  )\times (0,\gamma_{\rm{max}}) \times B_\delta$.

Now we come back to equation (\ref{source-perturbation}).
If we manage to find a value of $E^\ast$ such that
\begin{equation}
\label{rootf}
\tau_* (E^\ast, \gamma , F )=\frac{1}{4}\, T_0(E,\gamma ),
\end{equation}
then the symmetry constraints satisfied by $F \in L^2_e$ in (\ref{spaces})
imply that $Z \in H^2(0,T_0)$ also satisfies the symmetry conditions in (\ref{spaces}).
Indeed, $Z(\tau)$ and $Z(-\tau)$ are solutions of the same Cauchy problem, hence
$Z(\tau)=Z(-\tau)$ by Cauchy's theorem. Using the same argument
one obtains $Z(\tau + T_0 /4)=-Z(-\tau + T_0 /4)$ (this requires in addition the evenness of $V$).
These two equalities imply $Z(0)=-Z(T_0 /2)=-Z(-T_0 /2)=Z(T_0)$ and
in the same way $\dot{Z}(0)=\dot{Z}(T_0)=0$. Consequently,
$Z(\tau)$ and $Z(\tau + T_0)$ are solutions of the same Cauchy problem, hence
$Z$ is $T_0$-periodic. Therefore, $Z \in H^2_e$ if there is $E^*$
solving equation (\ref{rootf}).

Using expansions (\ref{period-expansion-T0}) and (\ref{root-period}), equation (\ref{rootf}) can be rewritten
\begin{equation}
\label{rootexp}
R(E^\ast , \gamma^{ \varepsilon} )=0,
\end{equation}
where
\begin{equation}
\label{rootexpression}
R(E^\ast,\gamma^{ \varepsilon})=
\int_{0}^{a_0} \frac{dx}{\sqrt{ E - V(x)}} - \frac{a_0 }{\sqrt{E - V_{\infty}}}
 -\int_{0}^{a_0} \frac{dx}{\sqrt{ E^\ast - V(x)}} - \frac{a_0 }{\sqrt{E^\ast - V_{\infty}}}
 + {\cal O}(\gamma^{ \varepsilon})
\end{equation}
as $\gamma \rightarrow 0$, uniformly in $F \in B_\delta$ (we omit the dependency of $R$ with respect to $E$ and $F$ in notations). Now let us consider the non-degeneracy condition
(\ref{condition-inversion}), which is equivalent to $\frac{\partial R}{\partial E^\ast }(E,0) \neq 0$.
Due to the non-differentiability of $R$ at $\gamma =0$, we cannot solve (\ref{rootexp}) directly
by the implicit function theorem, but rewrite the problem in the form
\begin{equation}
\label{rootexp2}
E^\ast - E = {\cal O}(\gamma^{ \varepsilon} + (E^\ast - E)^2)
\end{equation}
using the fact that $\frac{\partial R}{\partial E^\ast }(E,0) \neq 0$.
By the contraction mapping theorem,
if $E>V_L$, $\varepsilon \in \left(0,\frac{1}{2}\right)$, $\delta > 0$, $F \in B_\delta$ are
fixed, there exist constants $\gamma_0(\varepsilon,\delta,E) > 0$ and $C_0(\varepsilon,\delta,E)$ such that equation (\ref{rootexp2}) admits a unique solution $E^\ast$ near $E$ for all $\gamma \in (0,\gamma_0)$, with
\begin{equation}
\label{bound-energy}
|E^\ast - E| \leq C(\varepsilon,\delta, E) \gamma^{\varepsilon} .
\end{equation}
Moreover, the map $(\gamma , F) \mapsto E^\ast$ is $C^1$ on $(0,\gamma_0) \times B_\delta$
by the uniform contraction principle.

This completes the proof of existence and local uniqueness of solution $Z$. Estimate (\ref{deltat0})
follows directly from expansion (\ref{equation-5}).

It remains to prove bounds (\ref{bound-L2-F}) and (\ref{boundC1}).
For this purpose,
we first consider a time interval in which $Z$ and $X$
are both given by the explicit solution of a linear equation.
Recall that
\begin{equation}
\label{solution-linear-oscillator-perturbed-again}
X(\tau )=\sqrt{E-V_\infty}\, \cos{(\sqrt{2}\, \tau)}, \quad
0\leq \tau \leq \tilde{\tau}_0,
\end{equation}
where
$$
\tilde{\tau}_0 = \frac{1}{\sqrt{2}}\, \mbox{arccos}
\Big( \frac{a_0 \gamma^{1/2}}{(E - V_{\infty})^{1/2}} \Big) .
$$
Now consider $\hat{\tau}_0 = \min(\tau_0 , \tilde{\tau}_0 )=\frac{T_0}{4}+ {\cal O}(\gamma^{1/2})$.
By combining the explicit expressions of $X$ and $Z$
(see equation (\ref{solution-linear-oscillator-perturbed})),
estimates (\ref{equation-1}), (\ref{bound-energy})
and observing that
$$
W_0 = -\sqrt{2(E^\ast -V_\infty)}+ {\cal O}(\gamma),
$$
one obtains
\begin{equation}
\label{bound-l-infty}
\| Z - X \|_{L^{\infty}(0,\hat{\tau}_0)} \leq
C_0(\varepsilon,\delta,E) \gamma^{\varepsilon}, \quad
\| \dot{Z} - \dot{X} \|_{L^{\infty}(0,\hat{\tau}_0)} \leq
C_0(\varepsilon,\delta,E) \gamma^{\varepsilon} .
\end{equation}

Since $\hat{\tau}_0 = \frac{T_0}{4}+ {\cal O}(\gamma^{1/2})$, it follows
from equation (\ref{solution-linear-oscillator-perturbed}) and (\ref{solution-linear-oscillator-perturbed-again})
that
\begin{equation}
\label{infinity-piece}
\| X \|_{L^{\infty}(\hat{\tau}_0 , T_0 /4 )} + \| Z \|_{L^{\infty}(\hat{\tau}_0 , T_0 /4 )} =
{\cal O}(\gamma^{1/2}) \quad \mbox{\rm as} \quad \gamma \to 0.
\end{equation}
As a result, we extend the first bound in (\ref{bound-l-infty}) to
$$
\| Z - X \|_{L^{\infty}} \leq C_0(\varepsilon,\delta,E) \gamma^{\varepsilon}.
$$
To extend the second bound in (\ref{bound-l-infty}), we write for all $\tau \in \left[ \hat{\tau}_0, \frac{1}{4} T_0 \right]$,
\begin{eqnarray}
\label{estimc1}
\left\{ \begin{array}{l}
|\dot{Z}(\tau)| = \sqrt{2(H(\tau) - V(\gamma^{-1/2} Z(\tau)) - Z^2(\tau))} \leq \sqrt{2H(\tau )} = \sqrt{2 E}+ {\cal O}(\gamma^{\varepsilon}), \\
|\dot{X}(\tau)| = \sqrt{2(E - V(\gamma^{-1/2} X(\tau)) - X^2(\tau))}  \leq \sqrt{2E}. \end{array} \right.
\end{eqnarray}
where we have used bounds (\ref{control-H}) and (\ref{bound-energy}). By bounds (\ref{equation-5})
and (\ref{infinity-piece}), hence we have
\begin{equation}
\label{bound-l-infty-bis}
\| X \|_{H^{1}(\hat{\tau}_0 , T_0 /4 )} + \| Z \|_{H^{1}(\hat{\tau}_0 , T_0 /4 )} \leq C_0(\varepsilon,\delta,E) \gamma^{1/4}.
\end{equation}
As a conclusion, estimate (\ref{bound-L2-F}) follows from estimates
(\ref{bound-l-infty}) and (\ref{bound-l-infty-bis}).
Estimate (\ref{boundC1}) follows from (\ref{bound-l-infty}) and (\ref{estimc1})
for $\gamma$ small enough.
\end{proof}

\begin{remark}
A $T_0$-periodic solution of the perturbed oscillator equation (\ref{source-perturbation})
may not exist if the constraint (\ref{condition-inversion}) is not met. For instance, if $V \equiv 0$
(this corresponds to the case $a_0 = 0$)
and $F(\tau) = \cos(\sqrt{2} \tau)$, the perturbed linear oscillator
(\ref{source-perturbation}) has no $T_0$-periodic solutions
because of the resonance of the linear oscillator and the
$T_0$-periodic force for $T_0 = \sqrt{2}\pi$.
Since $\Delta_0 = 0$,
expansion (\ref{equation-2}) yields the
classical constraint
$$
\int_0^{\frac{\pi}{2\sqrt{2}}} F(\tau) \cos(\sqrt{2} \tau) d \tau =0
$$
for the existence of a $\sqrt{2}\pi$-periodic solution
under symmetric periodic forcing. The constraint is not satisfied if $F(\tau) = \cos(\sqrt{2} \tau)$.
\end{remark}

For the arguments in the proof of Theorem \ref{theorem-main} based on the Schauder fixed-point
theorem, we will need a continuity of the nonlinear map $F \mapsto Z$
from $L^2_e$ to $H^2_e$.

\begin{lemma}
\label{lemma-continuity} Under Assumptions of Lemma \ref{lemma-spectrum},
for all $\gamma \in (0,\gamma_0)$ and all $F \in B_{\delta}$,
the solution $Z=\mathcal{G}_{\gamma , \epsilon}(F) \in H^2_e$ is continuous
with respect to $F \in L^2_e$.
\end{lemma}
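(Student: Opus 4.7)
The plan is to leverage the smoothness structure already assembled in the proof of Lemma~\ref{lemma-spectrum}. There, the construction of $Z$ factors through two ingredients: the $C^1$ dependence on $(W_0,\gamma,F)$ of the shooting data $(Z_0,\Delta_0)$ coming from the contraction (\ref{systzd}); and the $C^1$ dependence on $(\gamma,F) \in (0,\gamma_0) \times B_{\delta}$ of the energy $E^\ast$ coming from the implicit function theorem applied to (\ref{rootexp2}) via the uniform contraction principle. Since $W_0 = -\sqrt{2(E^\ast - V_{\infty} - a_0^2 \gamma)}$, composing these two $C^1$ dependencies yields continuity (in fact $C^1$-regularity) of the initial datum $Z(0) = Z_0(\gamma,F)$ as a function of $F \in L^2_e$.

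With this in hand, my first step would be to view $Z$ directly as the solution of the Cauchy problem for (\ref{source-perturbation}) on $[0,T_0/4]$ with data $(Z_0,0)$, bypassing the piecewise linear/nonlinear construction. For a fixed $\gamma \in (0,\gamma_0)$, the nonlinearity $Z \mapsto \gamma^{-1/2}V'(\gamma^{-1/2}Z)$ is globally Lipschitz on $\R$ (since $V' \in C^1$ has compact support, so $V''$ is bounded). A Duhamel/Gronwall argument on the integral reformulation, together with the fact that $F \mapsto \int_0^{\tau}\sin(\sqrt{2}(\tau-\tau'))F(\tau')\, d\tau'$ is bounded from $L^2_e$ into $C^0([0,T_0/4])$, would give continuity of $(Z_0,F) \mapsto (Z,\dot{Z})$ from $\R \times L^2_e$ into $C^0([0,T_0/4])^2$. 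Composing with the previous step, $F \mapsto Z|_{[0,T_0/4]}$ is continuous from $L^2_e$ into $C^1([0,T_0/4])$, and the symmetric extension already used in Lemma~\ref{lemma-spectrum} transfers this into continuity from $L^2_e$ into $C^1_{\rm per}(0,T_0) \subset H^1_e$.

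To upgrade from $H^1_e$ to $H^2_e$, I would read $\ddot{Z}$ directly off the ODE: for $F_1,F_2 \in B_{\delta}$ with corresponding solutions $Z_1,Z_2$,
$$
\ddot{Z}_1 - \ddot{Z}_2 = -2(Z_1-Z_2) - \gamma^{-1/2}\bigl(V'(\gamma^{-1/2}Z_1) - V'(\gamma^{-1/2}Z_2)\bigr) + \gamma^{\varepsilon+1/2}(F_1 - F_2).
$$
Lipschitz continuity of $V'$ combined with $\|Z_1-Z_2\|_{L^{\infty}} \to 0$ (from the previous step) forces the right-hand side to zero in $L^2_e$ as $\|F_1-F_2\|_{L^2_e}\to 0$. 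Combined with the $H^1_e$-continuity of $Z$ in $F$, this yields the claimed $H^2_e$-continuity.

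The only non-cosmetic obstacle I anticipate is the singular factor $\gamma^{-1}$ in the Lipschitz constant of the composition with $V'(\gamma^{-1/2}\cdot)$; but since this lemma fixes $\gamma$ and only varies $F$, this factor is a harmless constant and does not interfere with the Gronwall/Duhamel estimates. In particular, no further smallness in $\gamma$ beyond that already used in Lemma~\ref{lemma-spectrum} is needed.
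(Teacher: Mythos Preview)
Your argument is correct and actually more streamlined than the paper's. The paper retains the piecewise structure from the construction in Lemma~\ref{lemma-spectrum}: it treats the linear region $[0,\hat\tau_0]$ with $\hat\tau_0=\min(\tau_1,\tau_2)$ via the explicit solution (\ref{solution-linear-oscillator-perturbed}), then rescales the nonlinear region $[\hat\tau_0,T_0/4]$ by the substitution $z=\gamma^{-1/2}Z$, $t=\gamma^{-1/2}(\tau-\hat\tau_0)$ so that the nonlinearity becomes simply $V'(z)$ with an $O(1)$ Lipschitz constant, and applies Gronwall on this rescaled interval of length $O(1)$. Your key simplification is the observation that, since $\gamma$ is fixed in this lemma, one may dispense with the splitting and rescaling altogether: the map $Z\mapsto \gamma^{-1/2}V'(\gamma^{-1/2}Z)$ is globally Lipschitz on $\R$ (with constant $\gamma^{-1}\|V''\|_{L^\infty}$), so a single Gronwall on the full Cauchy problem on $[0,T_0/4]$ suffices. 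The price is an exponential factor $e^{C\gamma^{-1}}$ in the continuity modulus, which is harmless here but would be unusable if uniform-in-$\gamma$ bounds were needed; the paper's splitting keeps the Gronwall exponent $O(1)$ in $\gamma$, though this is not exploited in the sequel. Your upgrade from $H^1_e$ to $H^2_e$ by reading $\ddot Z$ off the equation is essentially the same as the paper's final step, which phrases it via the invertibility of $\Delta=d^2/d\tau^2$ on $H^2_e$.
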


\begin{proof}
From Lemma \ref{lemma-spectrum}, we know that
for given $F_1 \in B_{\delta}$ and $F_2 \in B_{\delta}$,
there exist roots $E^* = E_1$ and $E^* = E_2$ of equation (\ref{rootexp}) and
if $F_1 \to F_2$ in $L^2_e$, then $E_1 \to E_2$.
For each $E_1$ and $E_2$, we adapt the notations of Lemma \ref{lemma-spectrum} by using the symbols
$W_{1,2} = -\sqrt{2(E_{1,2} - V_{\infty} - a_0^2 \gamma)}$, $Z_0^{(1),(2)}$ and $\tau_{1,2}$
instead of $W_0$, $Z_0$ and $\tau_0 = \frac{\pi}{2 \sqrt{2}} - \Delta_0$ respectively.
Expansion (\ref{equation-2}) implies that
$$
\exists C > 0 : \quad |\tau_1 - \tau_2| \leq C
\gamma^{1/2} \left( |E_1 - E_2| + \gamma^{\varepsilon} \| F_1 - F_2 \|_{L^2(0,T_0)} \right).
$$
Let $\hat{\tau}_0 = \min(\tau_1,\tau_2)$. From the solution (\ref{solution-linear-oscillator-perturbed})
of the linear oscillator equation (\ref{linear-oscillator-perturbed}) and the expansion
(\ref{equation-1}), we obtain
\begin{equation}
\label{estimlin}
\exists C > 0 : \quad \|Z_1 - Z_2 \|_{H^2_{\rm per}(0,\hat{\tau}_0)}
\leq C \left( |E_1 - E_2| + \gamma^{\varepsilon + 1/2} \| F_1 - F_2 \|_{L^2(0,T_0)} \right).
\end{equation}
Consequently, if $F_1 \to F_2$ in $L^2_e$, then $E_1 \to E_2$ and $Z_1 \to Z_2$ in $H^2_{\rm per}(0,\hat{\tau}_0)$.

There remains to complete (\ref{estimlin}) by a continuity result on
the time interval $[\hat{\tau}_0,\tau_*]$, on which we rewrite the differential equation (\ref{nonlinear-oscillator-perturbed}) in variables
\begin{equation}
\label{scaling-inner}
z_{1,2}(t) = \gamma^{-1/2} Z_{1,2}(\tau), \quad f_{1,2}(t) = F_{1,2}(\tau), \quad
t = \gamma^{-1/2} (\tau - \hat{\tau}_0),
\end{equation}
or explicitly,
\begin{equation}
\label{ivp-inner}
\ddot{z}_{1,2}(t) + 2 \gamma z_{1,2}(t) + V'(z_{1,2}(t)) = \gamma^{\varepsilon + 1} f_{1,2}(t), \quad t \in [0,\hat{t}_0],
\end{equation}
where $\hat{t}_0 = \gamma^{-1/2} (\tau_* - \hat{\tau}_0) = {\cal O}(1)$ thanks to (\ref{deltat0}) and
$\tau_* = T_0 /4$.

For definiteness, let us assume that $\tau_1 > \tau_2$ so that $\hat{\tau}_0 = \tau_2$.
The initial-value problems for differential equations (\ref{ivp-inner}) are started with
$(z_1(0),\dot{z}_1(0))$ and $(z_2(0),\dot{z}_2(0))$, where $z_2(0) = a_0$,
$\dot{z}_2(0) = \dot{Z}_2(\tau_2) = W_2$, and $z_{1}(0) = \gamma^{-1/2} Z_{1}(\tau_2)$,
$\dot{z}_1(0) = \dot{Z}_1(\tau_2)$ are determined by (\ref{solution-linear-oscillator-perturbed}).

Since $V'$ is globally Lipschitzian by the assumptions (P1) and (P4), Gronwall's inequality implies
the existence of $C>0$ such that
\begin{eqnarray*}
\quad \| z_1 - z_2 \|_{C^1(0,\hat{t}_0)} & \leq & C \left(
|z_1(0) - z_2(0)| + |\dot{z}_1(0) - \dot{z}_2(0)| + \gamma^{\epsilon + 1} \int_0^{\hat{t}_0} | f_1(t) - f_2(t)| dt \right) \\
& \leq & C \left(
 \frac{|Z_{1}(\tau_2)-Z_{2}(\tau_2)|}{\gamma^{1/2}} + |\dot{Z}_1(\tau_2) - \dot{Z}_2(\tau_2)| + \gamma^{\epsilon + 1}  \hat{t}_0^{1/2} \| f_1 - f_2 \|_{L^2(0,\hat{t}_0)} \right).
\end{eqnarray*}
When $F_1 \to F_2$ in $L^2_e$, the first equation of (\ref{systzd}) yields $Z_0^{(1)} \rightarrow Z_0^{(2)}$
(since $W_1 \rightarrow W_2$ and $\tau_1 \rightarrow \tau_2$),
which implies $Z_{1}(\tau_2) \rightarrow Z_{2}(\tau_2)$ and $\dot{Z}_1(\tau_2) \rightarrow \dot{Z}_2(\tau_2)$
thanks to expression (\ref{solution-linear-oscillator-perturbed}).
Consequently we obtain $\| z_1 - z_2 \|_{C^1(0,\hat{t}_0)} \rightarrow 0$ and
$\| Z_1 - Z_2 \|_{C^1(\hat{\tau}_0,\tau_*)} \rightarrow 0$.
Combining this result with (\ref{estimlin}), we see that if $F_1 \to F_2$ in $L^2_e$
then $Z_1 \to Z_2$ in $H^1_e$.

Now observing that $\Delta = \frac{d^2}{d\tau^2}\, : H^2_e \rightarrow L^2_e$ is invertible
and considering equation (\ref{source-perturbation}) that defines $\mathcal{G}_{\gamma , \epsilon}(F)$
implicitly, $Z_{1,2} = \mathcal{G}_{\gamma , \epsilon}(F_{1,2})$ satisfy the equality
$$
Z_{1,2} = \Delta^{-1}\, [\,  \gamma^{\varepsilon+1/2} F_{1,2} -2 Z_{1,2}
-  \gamma^{-1/2} V'(\gamma^{-1/2}  Z_{1,2})  \, ].
$$
Hence continuity in $H^1_e$ norm implies continuity in $H^2_e$ norm.
\end{proof}

\section{Small-amplitude oscillations on other sites}
\label{section-3}

In our construction, the large-amplitude breather bifurcating from
infinity as $\gamma \to 0$ is localized at a
single site $n = 0$, and close to the periodic solution of
Lemma \ref{lemma-oscillations}. The other amplitudes for $n \neq
0$ display the oscillatory motion guided by the central
oscillation at $n = 0$ and powered by a small amplitude in
$\gamma$. By symmetry of the discrete Laplacian, we may assume
that
$$
x_n = x_{-n}, \quad n \geq 1.
$$
Since the amplitudes of
oscillations for $n \neq 0$ are small, we shall consider the
linearized discrete Klein--Gordon equation (\ref{KGlattice}) at the zero
solution to study possible resonances with the oscillatory motion
at $n = 0$.

To this end, let us introduce the function spaces
$$
\mathbb{X} := L^2_{\rm per}((0,T_0);l^2(\mathbb{N})), \quad
\mathbb{D} := H^2_{\rm per}((0,T_0);l^2(\mathbb{N})),
$$
and denote $\omega_0 = \frac{2\pi}{T_0}$.
Let us use assumption (P2) and consider
the inhomogeneous linear problem
\begin{equation}
\label{inhomogen-problem}
\begin{array}{l}
\gamma U''_1(\tau) + \kappa^2 U_1(\tau) = \gamma
(U_{2}(\tau)  - 2 U_1(\tau)) + F_1(\tau), \\
\\
\gamma U''_n(\tau) + \kappa^2 U_n(\tau) = \gamma
(U_{n+1}(\tau) + U_{n-1}(\tau) - 2 U_n(\tau)) + F_n(\tau), \quad n
\geq 2,
\end{array}
\end{equation}
where $\tau = \gamma^{1/2} t$ denotes the rescaled time.
We denote the sequence
$\{ F_n(\tau ) \}_{n \geq 1} \in \mathbb{X}$ by ${\bf F}$  and look for
solution ${\bf U}=\{ U_n(\tau ) \}_{n \geq 1} \in \mathbb{D}$  of (\ref{inhomogen-problem}).
We shall prove the following.

\begin{lemma}
Assume the non-resonance condition (\ref{resonances}) to be satisfied
for given values of $\kappa$, $\gamma $ and $\omega_0$, i.e.
\begin{equation}
\label{resonances-again} \kappa^2 + 2 \gamma (1 -
\cos{q}) \neq  m^2 \omega_0^2 \gamma , \quad
\forall\, m \in \Z, \quad \forall q \in [-\pi,\pi].
\end{equation}
Then, for all ${\bf F}\in \mathbb{X}$ problem (\ref{inhomogen-problem})
admits a unique solution ${\bf U} \in \mathbb{D}$.
Moreover, one has
\begin{equation}
\label{bound-resolvent-intermediate}
\| {\bf U} \|_{\mathbb{X} } \leq
C_R (\gamma , \omega_0 , \kappa ) \| {\bf F} \|_{\mathbb{X}},
\end{equation}
where
\begin{equation}
\label{constant}
C_R (\gamma , \omega_0 , \kappa )=
\Big(
{\displaystyle \inf_{m\in \mathbb{Z}, \, q \in [-\pi,\pi]} \left| \kappa^2 - m^2
\omega_0^2 \gamma + 2 \gamma (1 - \cos{q}) \right| }
\Big)^{-1} < +\infty .
\end{equation}
Moreover, if in addition ${\bf F} \in H^1_{\rm per}((0,T_0);l^2(\mathbb{N}))$ then
\begin{equation}
\label{bound-resolvent-intermediate-2}
\| {\bf U} \|_{ H^1_{\rm per}((0,T_0);l^2(\mathbb{N}))  } \leq
C_R (\gamma , \omega_0 , \kappa ) \| {\bf F} \|_{ H^1_{\rm per}((0,T_0);l^2(\mathbb{N}))  }.
\end{equation}
\label{lemma-4}
\end{lemma}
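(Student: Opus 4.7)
The approach is to diagonalize (\ref{inhomogen-problem}) by two successive Fourier decompositions. First, since the data are $T_0$-periodic in $\tau$, expand
\[
U_n(\tau) = \sum_{m \in \Z} \hat{U}_n^{(m)}\, e^{i m \omega_0 \tau}, \quad F_n(\tau) = \sum_{m \in \Z} \hat{F}_n^{(m)}\, e^{i m \omega_0 \tau}.
\]
Second, observe that the spatial operator appearing in (\ref{inhomogen-problem}) is the discrete Laplacian on the half-line $n \geq 1$ with a Dirichlet-type condition at the ghost site $n = 0$: the boundary form $\gamma(U_2 - 2 U_1)$ of the $n=1$ equation is exactly $\gamma(U_2 + U_0 - 2 U_1)$ with $U_0 = 0$. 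Accordingly, the functions $\varphi_n(q) := \sin(nq)$, $n \geq 1$, $q \in (0,\pi)$, are generalized eigenfunctions of this operator with eigenvalue $-2(1 - \cos q)$, and the sine transform $\hat{U}_n^{(m)} = \sqrt{2/\pi}\, \int_0^{\pi} \tilde{U}^{(m)}(q)\, \sin(nq)\, dq$ provides an isometric isomorphism $l^2(\N) \simeq L^2(0,\pi)$.

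In these variables the problem decouples into the scalar algebraic equation
\[
D_m(q)\, \tilde{U}^{(m)}(q) = \tilde{F}^{(m)}(q), \quad D_m(q) := \kappa^2 - m^2 \omega_0^2 \gamma + 2\gamma(1 - \cos q),
\]
for every $(m,q) \in \Z \times [0,\pi]$. Under the non-resonance hypothesis (\ref{resonances-again}), $D_m(q) \neq 0$ pointwise, so the unique solution is given by $\tilde{U}^{(m)}(q) = D_m(q)^{-1} \tilde{F}^{(m)}(q)$. Parseval's identity applied in both variables then yields
\[
\| {\bf U} \|_{\mathbb{X}} \leq \Big( \sup_{m \in \Z,\, q \in [-\pi,\pi]} |D_m(q)|^{-1} \Big)\, \| {\bf F} \|_{\mathbb{X}} = C_R(\gamma,\omega_0,\kappa)\, \| {\bf F} \|_{\mathbb{X}},
\]
which is (\ref{bound-resolvent-intermediate}). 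Because the multiplier $D_m(q)^{-1}$ depends only on $(m,q)$, it commutes with $\partial_\tau$ (which acts as multiplication by $i m \omega_0$ on the $m$-th Fourier mode), so the same estimate applied to $\partial_\tau {\bf F}$ delivers the $H^1$-in-$\tau$ bound (\ref{bound-resolvent-intermediate-2}). Finally, ${\bf U} \in \mathbb{D}$ follows from (\ref{inhomogen-problem}) itself, which expresses $\gamma {\bf U}''$ as a bounded linear combination in $\mathbb{X}$ of ${\bf U}$, a unit shift of ${\bf U}$, and ${\bf F}$.

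The one subtle point, and the main obstacle, is verifying that the supremum defining $C_R$ is actually finite rather than merely pointwise nonzero in $(m,q)$. This follows from a compactness argument. Since $m^2 \omega_0^2 \gamma \to +\infty$ as $|m| \to \infty$ while $|2\gamma(1 - \cos q)| \leq 4\gamma$ uniformly in $q$, there exists $M = M(\gamma,\omega_0,\kappa) \in \N$ such that $|D_m(q)| \geq 1$ for all $|m| \geq M$ and all $q \in [-\pi,\pi]$. For each of the remaining finitely many $m \in \{-M+1,\ldots,M-1\}$, the function $q \mapsto |D_m(q)|$ is continuous and nonvanishing on the compact interval $[-\pi,\pi]$ by (\ref{resonances-again}), hence attains a strictly positive minimum. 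Taking the infimum over these finitely many cases together with the tail estimate gives a strictly positive lower bound on $|D_m(q)|$, so $C_R < +\infty$, completing the argument.
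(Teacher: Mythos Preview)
Your proof is correct and follows essentially the same route as the paper: a double Fourier decomposition (time Fourier series plus the discrete sine transform in $n$, exploiting the Dirichlet condition at the ghost site $n=0$) reduces the problem to inverting the scalar multiplier $D_m(q)$, after which Parseval gives the $\mathbb{X}$-bound and differentiation in $\tau$ gives the $H^1$-bound. The only cosmetic differences are that the paper verifies ${\bf U}\in\mathbb{D}$ by estimating $\sum_m m^4\|\hat u_m\|^2$ directly rather than reading $\ddot{\bf U}\in\mathbb{X}$ off the equation as you do, and that your compactness argument for $C_R<\infty$ is spelled out more explicitly than in the paper.
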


\begin{proof}
In order to solve (\ref{inhomogen-problem}), we expand ${\bf F}$ and ${\bf U}$
using Fourier series in the time variable $t$ and band-limited Fourier transforms in the discrete
spatial coordinate $n$. Defining
$$
\hat{f}_m(q)=
\frac{1}{\pi T_0 }\sum_{n\geq 1}\int_{0}^{T_0}{F_n(\tau ) \, \sin(n q)\, e^{-i m \omega_0 \tau} \, d\tau},
$$
$$
\hat{u}_m(q)=
\frac{1}{\pi T_0 }\sum_{n\geq 1}\int_{0}^{T_0}{U_n(\tau ) \, \sin(n q)\, e^{-i m \omega_0 \tau}\, d\tau},
$$
we have thus
$$
F_n(\tau) = \sum_{m \in \Z} \int_{-\pi}^{\pi}
{\hat{f}_m(q) \sin(n q) e^{i m \omega_0 \tau}\,  dq},
\ \ \
U_n(\tau) = \sum_{m \in \Z} \int_{-\pi}^{\pi}
{\hat{u}_m(q) \sin(n q) e^{i m \omega_0 \tau}\,  dq} .
$$
Note that the Fourier transform defines
an isometric isomorphism between the Hilbert spaces $\mathbb{X}$
and $\mathcal{X} := \ell^2 (\Z ; L^2_{\rm per}(0, 2\pi))$,
and between $\mathbb{D}$ and $\mathcal{D} := \ell^2_2 (\Z ; L^2_{\rm per}(0, 2\pi))$, where the latter denotes
the usual Hilbert space consisting of sequences $\{\hat{u}_m\}_{m \in \mathbb{Z}}$ in $L^2_{\rm per}(0, 2\pi)$
for which $\{ m^2 \hat{u}_m \} \in \ell^2 (\Z ; L^2_{\rm per}(0, 2\pi))$.

In the Fourier space, differential equations (\ref{inhomogen-problem}) reduce to the set of algebraic equations
\begin{equation}
\label{dualpb}
[\, \kappa^2 - m^2 \omega_0^2 \gamma + 2 \gamma
(1 - \cos{q})\, ]\, \hat{u}_m(q) = \hat{f}_m(q) ,
\end{equation}
where $\hat{\bf f} \in \mathcal{X}$ and we look for $\hat{\bf u} \in \mathcal{D}$.
Thanks to condition (\ref{resonances-again}),
for all $m \in \Z$, equation (\ref{dualpb})
has a unique solution $\hat{u}_m =\hat{H}_m\, \hat{f}_m \in  L^2_{\rm per}(0, 2\pi)$,
where
$$
\hat{H}_m (q)= [\, \kappa^2 - m^2 \omega_0^2 \gamma + 2 \gamma (1 - \cos{q})\, ]^{-1}
$$
is $2\pi$-periodic and analytic over $\mathbb{R}$.
To check that $\hat{\bf u} \in \mathcal{D}$, we use the estimate
\begin{equation}
\label{series}
\sum_{m \in \Z}{m^4\, \|  \hat{u}_m \|_{L^2_{\rm per}}^2 }\leq
\sum_{m \in \Z}{m^4\, \|  \hat{H}_m \|_{L^\infty}^2 \, \|  \hat{f}_m \|_{L^2_{\rm per}}^2 } \leq
\big(\sup_{m \in \Z}{(m^2\, \|  \hat{H}_m \|_{L^\infty}) \, }\big)^2\, \| \hat{\bf f} \|_{\mathcal{X}}^2  ,
\end{equation}
where
$$
\|  \hat{H}_m \|_{L^\infty} =
\frac{1}{
\inf_{q \in [-\pi,\pi]} \left| \kappa^2 - m^2
\omega_0^2 \gamma + 2 \gamma (1 - \cos(q)) \right|
}.
$$
For $|m| \geq \omega_0^{-1}(4+\gamma^{-1}\kappa^2)^{1/2}$ one has
$$
\|  \hat{H}_m \|_{L^\infty} = \frac{1}{
 m^2\omega_0^2 \gamma -\kappa^2 - 4 \gamma
}
$$
and $\sum_{m \in \Z}{m^4\, \|  \hat{u}_m \|_{L^2_{\rm per}}^2 } < \infty$
according to (\ref{series}). Consequently, there exists
a unique solution ${\bf U}\in \mathbb{D}$
of the inhomogeneous system (\ref{inhomogen-problem}) in the form
\begin{equation}
\label{solution-inhomogen-problem} U_n(\tau) = \sum_{m \in \Z} \int_{-\pi}^{\pi}
\frac{\hat{f}_m(q) \sin(n q) e^{i m \omega_0 \tau}}{\kappa^2 - m^2 \omega_0^2 \gamma + 2 \gamma
(1 - \cos{q})} dq .
\end{equation}
Moreover, we have
\begin{equation}
\label{bound-resolv}
\| {\bf U} \|_{\mathbb{X} }^2 =
\sum_{m \in \Z}{\|  \hat{u}_m \|_{L^2_{\rm per}}^2 } \leq
\sum_{m \in \Z}{\|  \hat{H}_m \|_{L^\infty}^2 \, \|  \hat{f}_m \|_{L^2_{\rm per}}^2 } \leq
\big(\sup_{m \in \Z}{\|  \hat{H}_m \|_{L^\infty} \, }\big)^2\,
\| {\bf F} \|_{\mathbb{X}}^2.
\end{equation}
This proves estimate (\ref{bound-resolvent-intermediate}). Estimate
(\ref{bound-resolvent-intermediate-2}) follows by differentiating
(\ref{inhomogen-problem}) with respect to $\tau$ and using
(\ref{bound-resolvent-intermediate}).
\end{proof}

Now we come back to the case in which $\omega_0$ depends in fact
on $\gamma$ and on the fixed energy $E$ of the excitation at site $n=0$.
More precisely, we recall that
$$
\omega_0(E,\gamma) = \frac{2\pi}{T(E,\gamma)} \gamma^{-1/2} = \sqrt{2} + {\cal O}(\gamma^{1/2}) \quad
\mbox{\rm as} \quad \gamma \to 0.
$$
Non-resonance conditions (\ref{resonances-again})
are satisfied if $\gamma$ belongs to the disjoint set
$C_{E} = \cup_{m \geq m_0} (\Gamma_m,\gamma_m)$, where $\Gamma_m$ and
$\gamma_m$ are roots of equations (\ref{Gamma-gamma}) for $m$ large enough. For each $\gamma \in C_{E}$,
a unique solution of the inhomogeneous system (\ref{inhomogen-problem}) exists in the form
(\ref{solution-inhomogen-problem}). However, the norm $\| {\bf U} \|_{\mathbb{X} }$
diverges as $\gamma$ approaches the boundary of $C_{E}$, according to estimate
(\ref{bound-resolvent-intermediate})-(\ref{constant}).
Following the approach introduced in \cite{James4}, we
quantify this divergence when $\gamma $ tends towards $0$
but remains in a well-chosen subset $\tilde{C}_{E,\nu}$ of $C_{E}$
far enough from resonances.

\begin{lemma}
\label{lemma-resolvent} Fix $E>V_L$ and $\nu \in (0,1)$. Let $\gamma \in
\tilde{C}_{E,\nu} = \cup_{m \geq m_0}
(\tilde{\Gamma}_m,\tilde{\gamma}_m) \subset C_{E}$, where
$m_0 \geq 1$ is large enough and
$\tilde{\Gamma}_m$, $\tilde{\gamma}_m$ are found from the equations
\begin{equation}
\label{Gamma-gamma-tilde-lemma}
\frac{\sqrt{\kappa^2
+ 4 \, \tilde{\Gamma}_m}}{\sqrt{(m+1)^2 - \nu (m+1)}} = \sqrt{\tilde{\Gamma}_m} \omega_0(E,\tilde{\Gamma}_m), \quad
\frac{\kappa}{\sqrt{m^2 + \nu m}} = \sqrt{\tilde{\gamma}_m} \omega_0(E,\tilde{\gamma}_m ),
\end{equation}
for $m\geq m_0$, and satisfy as $m\rightarrow +\infty$
\begin{equation}
\label{equivg}
\tilde{\Gamma}_m =
\frac{\kappa^2}{2 m^2}\Big(1 + \frac{\kappa \lambda(E)}{\pi m} - \frac{2-\nu}{m}
+ {\cal O}(m^{-2}) \Big), \quad \tilde{\gamma}_m =
\frac{\kappa^2}{2 m^2}
\Big(1 + \frac{\kappa \lambda(E)}{\pi m} - \frac{\nu}{m}  + {\cal O}(m^{-2}) \Big).
\end{equation}
There exist $\gamma_0(\nu ) > 0$ and $C_0(\nu ) > 0$ such that
for any $\gamma \in \tilde{C}_{E,\nu } \cap (0,\gamma_0(\nu))$
and any ${\bf F} \in \mathbb{X}$,
the solution ${\bf U}= \mathcal{L}^{-1}\, {\bf F}  \in \mathbb{D}$ of the inhomogeneous equation
(\ref{inhomogen-problem}) satisfies
\begin{equation}
\label{bound-resolvent-L-2}
\| {\bf U} \|_{\mathbb{X} }
\leq
C_0(\nu) \gamma^{-1/2}
\| {\bf F} \|_{\mathbb{X}}.
\end{equation}
Moreover, if in addition ${\bf F} \in H^1_{\rm per}((0,T_0);l^2(\mathbb{N}))$
\begin{equation}
\label{bound-resolvent}
\| {\bf U} \|_{ H^1_{\rm per}((0,T_0);l^2(\mathbb{N}))  } \leq
C_0(\nu) \gamma^{-1/2}
\| {\bf F} \|_{ H^1_{\rm per}((0,T_0);l^2(\mathbb{N}))  }.
\end{equation}
\end{lemma}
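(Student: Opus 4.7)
The plan is to derive (\ref{bound-resolvent-L-2}) directly from Lemma \ref{lemma-4}, which already yields $\| {\bf U}\|_{\mathbb{X}} \leq C_R(\gamma,\omega_0(E,\gamma),\kappa)\,\| {\bf F}\|_{\mathbb{X}}$ with $C_R$ the reciprocal of the infimum in (\ref{constant}). The whole task therefore reduces to establishing the lower bound $\inf_{m' \in \Z,\, q \in [-\pi,\pi]} |\kappa^2 - (m')^2 \omega_0^2(E,\gamma)\gamma + 2\gamma(1-\cos q)| \geq c(\nu)\,\gamma^{1/2}$ for $\gamma \in \tilde{C}_{E,\nu}$ sufficiently small. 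Separately, I would extract the expansions (\ref{equivg}) for $\tilde{\Gamma}_m, \tilde{\gamma}_m$ by inserting the expansion $\omega_0(E,\gamma) = \sqrt{2} - \lambda(E)\gamma^{1/2}/\pi + \mathcal{O}(\gamma)$ (from (\ref{period-expansion-T0})) into the defining equations (\ref{Gamma-gamma-tilde-lemma}) and solving by the implicit function theorem in $1/m$, with $m_0(\nu)$ chosen large enough for convergence; the $\kappa\lambda(E)/(\pi m)$ term traces to the $\omega_0$ correction, while $-\nu/m$ and $-(2-\nu)/m$ come from expanding $(m^2+\nu m)^{-1}$ and $((m+1)^2-\nu(m+1))^{-1}$ respectively.

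For each fixed $m'$, the $q$-infimum is the distance from $0$ to the interval $I_{m'} := [\kappa^2-(m')^2\omega_0^2\gamma,\ \kappa^2-(m')^2\omega_0^2\gamma + 4\gamma]$: it equals $\kappa^2-(m')^2\omega_0^2\gamma$ at $q=0$ when $I_{m'}\subset[0,\infty)$, and $(m')^2\omega_0^2\gamma - \kappa^2 - 4\gamma$ at $q=\pi$ when $I_{m'}\subset(-\infty,0]$. The key analytic input is the monotonicity $\gamma \mapsto \gamma\omega_0^2(E,\gamma) = 2\gamma + \mathcal{O}(\gamma^{3/2})$ on a right neighbourhood of $0$, whose derivative tends to $2$ as $\gamma \to 0$. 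Fix $\gamma \in (\tilde{\Gamma}_m,\tilde{\gamma}_m)$. Using monotonicity and (\ref{Gamma-gamma-tilde-lemma}): for $|m'|\leq m$ one gets $(m')^2\omega_0^2(E,\gamma)\gamma \leq m^2\omega_0^2(E,\tilde{\gamma}_m)\tilde{\gamma}_m = \kappa^2 m^2/(m^2+\nu m)$, hence $\kappa^2-(m')^2\omega_0^2\gamma \geq \kappa^2 \nu/(m+\nu) \geq c\kappa^2/m$. For $|m'|\geq m+1$, monotonicity gives $(m')^2\omega_0^2(E,\gamma)\gamma \geq (\kappa^2+4\tilde{\Gamma}_m)(m+1)^2/((m+1)^2-\nu(m+1))$, so that after subtracting $\kappa^2+4\gamma$ the lower bound on the $q=\pi$ value is $(\kappa^2+4\tilde{\Gamma}_m)\nu/(m+1-\nu) - 4(\gamma - \tilde{\Gamma}_m)$. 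Since $|\gamma - \tilde{\Gamma}_m| \leq \tilde{\gamma}_m - \tilde{\Gamma}_m = \mathcal{O}(m^{-3})$ by (\ref{equivg}), the first term (of order $1/m$) beats the second for all $m \geq m_0(\nu)$, giving again $\geq c\kappa^2/m$. The case $m' = 0$ is trivial since $I_0 \subset [\kappa^2, \kappa^2 + 4\gamma]$. Finally, $\gamma\in(\tilde{\Gamma}_m,\tilde{\gamma}_m)$ forces $m \sim \kappa/\sqrt{2\gamma}$, so $c\kappa^2/m \sim c\kappa\sqrt{\gamma}$, which delivers (\ref{bound-resolvent-L-2}).

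Estimate (\ref{bound-resolvent}) follows as at the end of the proof of Lemma \ref{lemma-4}: differentiating (\ref{inhomogen-problem}) in $\tau$ shows that $\dot{\bf U}$ satisfies the same system with source $\dot{\bf F}$, so applying (\ref{bound-resolvent-L-2}) to $\dot{\bf U}$ together with (\ref{bound-resolvent-L-2}) for ${\bf U}$ controls the full $H^1_{\rm per}((0,T_0);\ell^2(\mathbb{N}))$-norm. I expect the main technical obstacle to be the case $|m'|=m+1$ of the second inequality: the favourable term $(\kappa^2+4\tilde{\Gamma}_m)\nu/(m+1-\nu)$, which opens a spectral gap above the dispersion band, must dominate the unfavourable contribution $4(\gamma-\tilde{\Gamma}_m)$ that reflects the widening of the band as $\gamma$ grows from $\tilde{\Gamma}_m$. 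This balance is precisely why the sharp $\mathcal{O}(m^{-3})$ estimate of the interval widths $\tilde{\gamma}_m - \tilde{\Gamma}_m$ in (\ref{equivg}) is indispensable and dictates the choice of the threshold $m_0(\nu)$; the asymptotic expansions are thus not decorative but feed directly into the resolvent estimate.
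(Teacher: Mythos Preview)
Your proposal is correct and follows essentially the same strategy as the paper: invoke Lemma~\ref{lemma-4}, reduce to a lower bound on $\inf_{n,q}|M(n,q)|$ with $M(n,q)=\kappa^2-n^2\omega_0^2\gamma+2\gamma(1-\cos q)$, split into the cases $|m'|\le m$ and $|m'|\ge m+1$, and use the monotonicity of $\gamma\mapsto\gamma\omega_0^2(E,\gamma)$ together with the defining equations (\ref{Gamma-gamma-tilde-lemma}).

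The one noteworthy difference is your treatment of the case $|m'|\ge m+1$. You bound $(m')^2\gamma\omega_0^2$ from below by $(m+1)^2\tilde{\Gamma}_m\omega_0^2(E,\tilde{\Gamma}_m)$ using only the monotonicity of $\gamma\omega_0^2$, and then must absorb the leftover term $4(\gamma-\tilde{\Gamma}_m)$ via the $\mathcal{O}(m^{-3})$ width estimate from (\ref{equivg}); this is why you regard the asymptotics as ``indispensable''. The paper instead observes that the \emph{full} quantity $4\gamma-(m+1)^2\gamma\omega_0^2(E,\gamma)$ is decreasing in $\gamma$ for $m$ large (its $\gamma$-derivative is $4-2(m+1)^2+\mathcal{O}(m^2\gamma^{1/2})=-2m^2+\mathcal{O}(m)<0$), so $M(m+1,\pi)$ is maximised at $\gamma=\tilde{\Gamma}_m$ and the defining equation for $\tilde{\Gamma}_m$ gives directly $M(m+1,q)\le-\nu\kappa^2/(m+1-\nu)$, with no residual term to estimate. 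Both routes are valid; the paper's is slightly cleaner in that the expansions (\ref{equivg}) play only a descriptive role there, whereas in your argument they are genuinely needed to close the inequality.
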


\begin{proof}
Equations (\ref{Gamma-gamma-tilde-lemma}) can be solved for $m$ large enough
(say $m\geq m_0 (E)$) and small $ \tilde{\Gamma}_m , \tilde{\gamma}_m $
by combining expansion (\ref{period-expansion-T0}) and the implicit function arguments.
In order to deduce estimates (\ref{bound-resolvent-L-2}) and (\ref{bound-resolvent})
from Lemma \ref{lemma-4}, we need a lower bound for
\begin{equation}
\label{infpb}
M(n,q)=\kappa^2 - n^2
\omega_0^2 \gamma + 2 \gamma (1 - \cos(q)), \quad n \in \Z, \;\; q \in [-\pi,\pi].
\end{equation}

Let us assume $\gamma \in (\tilde{\Gamma}_m,\tilde{\gamma}_m)$
with $m$ large enough. In what follows we will show that $M(m,q)>0$ and $M(m+1,q)<0$.
For $n\leq m-1$ it follows that $M(n,q)>M(m,q)>0$, and
for $n\geq m+2$ we have also $M(n,q)<M(m+1,q)<0$, hence the infimum of (\ref{infpb})
will be reached for $n=m$ or $n=m+1$.

Let us start with the case $n=m$. By combining (\ref{defw0}) and (\ref{period-expansion}) it follows that
$$
\partial_{\gamma}(\gamma \omega_0^2(E,\gamma ))=2+ {\cal O}(\sqrt{\gamma})>0
$$
for small $\gamma > 0$. This property implies that the minimum of $M(m,q)$ occurs at
$\gamma = \tilde{\gamma}_m$ and $q = 0$.
Using the definition of $\tilde{\gamma}_m$ by (\ref{Gamma-gamma-tilde-lemma}), we obtain
$$
M(m,q) \geq \kappa^2 - m^2 \omega_0^2 (E,\tilde{\gamma}_m) \tilde{\gamma}_m =  \frac{\nu \kappa^2}{m + \nu} >0.
$$

Next, for  $n=m+1$ one has for $m$ large enough
$$
\partial_{\gamma}(4\gamma - (m+1)^2 \gamma \omega_0^2(E,\gamma ))=4-2(m+1)^2+ {\cal O}(m^2 \gamma^{1/2})
= -2 m^2+ {\cal O}(m)<0 .
$$
Therefore, the maximum of $M(m+1,q)$ occurs at $\gamma = \tilde{\Gamma}_m$ and $q = \pi$, where
$$
M(m+1,q)\leq \kappa^2 - (m+1)^2 \omega_0^2 (E,\tilde{\Gamma}_m) \tilde{\Gamma}_m +4 \tilde{\Gamma}_m .
$$
Using the definition of $\tilde{\Gamma}_m$ by (\ref{Gamma-gamma-tilde-lemma}), we obtain
$$
\kappa^2 - (m+1)^2 \omega_0^2 (E,\tilde{\Gamma}_m) \tilde{\Gamma}_m +4 \tilde{\Gamma}_m
=  -\frac{\nu  (m+1) \kappa^2 \omega_0^2 (E,\tilde{\Gamma}_m)}{\omega_0^2(E,\tilde{\Gamma}_m) ((m+1)^2 - \nu (m+1))
- 4},
$$
and consequently
$$
M(m+1,q)\leq -\frac{\nu  (m+1) \kappa^2 \omega_0^2 (E,\tilde{\Gamma}_m)}{\omega_0^2(E,\tilde{\Gamma}_m) ((m+1)^2 - \nu (m+1))- 4} < -\frac{\nu   \kappa^2 }{ m+1 - \nu } <0.
$$

As a result of the above analysis, we have
$$
\inf_{n \in \Z, q \in [-\pi,\pi]} \left| \, M(n,q) \,  \right| \geq  \frac{\nu   \kappa^2 }{ m+1}.
$$
Since $\gamma \leq \tilde{\gamma}_m \leq (\kappa / m)^2$ for $m$ large enough,
we get finally for all $\gamma \in \tilde{C}_{E,\nu }$ small enough
\begin{equation}
\label{bound-denominator}
\inf_{n \in \Z, q \in [-\pi,\pi]} \left| \, M(n,q) \,  \right| \geq    \frac{\nu   \kappa }{ 2}\, \gamma^{1/2}.
\end{equation}
Estimates (\ref{bound-resolvent-L-2}) and (\ref{bound-resolvent}) follow
directly from Lemma \ref{lemma-4} and estimate (\ref{bound-denominator}).
\end{proof}

\begin{remark}
If the set $\tilde{C}_{E,\nu} = \cup_{m \geq m_0}
(\tilde{\Gamma}_m,\tilde{\gamma}_m) \subset C_{E}$ is defined by
$$
\frac{\sqrt{\kappa^2
+ 4 \, \tilde{\Gamma}_m}}{\sqrt{(m+1)^2 - \nu (m+1)^q}} = \sqrt{\tilde{\Gamma}_m}
\omega_0(E,\tilde{\Gamma}_m), \quad
\frac{\kappa}{\sqrt{m^2 + \nu m^q}} = \sqrt{\tilde{\gamma}_m} \omega_0(E,\tilde{\gamma}_m ),
$$
for some $q \in (0,2)$ and $\nu > 0$, then $\tilde{\Gamma}_m$ and $\gamma_m$ satisfy as $m\rightarrow +\infty$
$$
\tilde{\Gamma}_m =
\frac{\kappa^2}{2 m^2}\Big(1 + \frac{\kappa \lambda(E)}{\pi m} - \frac{2}{m} + \frac{\nu}{m^{2-q}}
+ {\cal O}(m^{-2}) \Big), \quad \tilde{\gamma}_m =
\frac{\kappa^2}{2 m^2}
\Big(1 + \frac{\kappa \lambda(E)}{\pi m} - \frac{\nu}{m^{2-q}}  + {\cal O}(m^{-2}) \Big).
$$
From the requirement that $\tilde{\Gamma}_m < \tilde{\gamma}_m$ for large $m \geq m_0$, we can see
that either $q \in (0,1)$ and $\nu > 0$ or $q = 1$ and $\nu \in (0,1)$.
On the other hand, we have from the proof of Lemma \ref{lemma-resolvent} that
$$
\inf_{n \in \Z, q \in [-\pi,\pi]} \left| \, M(n,q) \,  \right| \geq  C(\nu) \gamma^{(2-q)/2}.
$$
For $q \in (0,1)$, the interval $(\tilde{\Gamma}_m,\tilde{\gamma}_m)$ converges to the interval
$(\Gamma_m,\gamma_m)$ for the price of losing too much power of $\gamma$ in the bound (\ref{bound-resolvent}).
Therefore, the estimate of Lemma \ref{lemma-resolvent} is sharp in this sense.
\end{remark}

\section{Proof of Theorem \ref{theorem-main}}
\label{section-4}

Let us represent
$$
x_0(t) = \frac{1}{\gamma^{1/2}} X_0(\tau), \quad
x_{-n}(t) = x_n(t) = X_n(\tau), \;\; n \geq 1,
$$
where $\{ X_n \}_{n \in \mathbb{Z}}$ is a new set of unknowns in time $\tau = \gamma^{1/2} t$.
From the discrete Klein--Gordon equation (\ref{KGlattice}), we obtain
\begin{eqnarray}
\label{equation-X0-zero}
& \phantom{t} &
\ddot{X}_0 + 2 X_0 + \gamma^{-1/2} V'(\gamma^{-1/2} X_0) = 2 \gamma^{1/2} X_1, \\
\label{equation-X1}
& \phantom{t} & \gamma \ddot{X}_1 + \kappa^2 X_1 + N(X_1) = \gamma (X_2 - 2 X_1) + \gamma^{1/2} X_0, \\
\label{equation-Xn}
& \phantom{t} & \gamma \ddot{X}_n + \kappa^2 X_n + N(X_n) = \gamma (X_{n+1} - 2 X_n + X_{n-1}), \quad n \geq 2,
\end{eqnarray}
where $N(X) := V'(X) - \kappa^2 X$.

Let $B_{\delta} \subset H^1_e$ be a ball of small radius $\delta > 0$ centered at
$0 \in H^1_e$. By assumption (P2), $N(X) : B_{\delta} \to H^1_e$ is a $C^5$ map.
Moreover, expansion $V'(x) = \kappa^2 x + {\cal O}(x^5)$
near $x = 0$ implies the existence of $C>0$ such that for all
$\delta > 0$ small enough we have
\begin{equation}
\label{bound-on-N}
\forall X \in B_{\delta}, \quad
\| N(X) \|_{H^1_e} \leq C \| X \|^5_{H^1_e},
\end{equation}
\begin{equation}
\label{bound-on-LC}
\forall X_1, X_2 \in B_{\delta}, \quad
\| N(X_1) -N(X_2) \|_{H^1_e} \leq C \, \delta^4\, \| X_1 -X_2 \|_{H^1_e}.
\end{equation}

From system (\ref{equation-X1}) and (\ref{equation-Xn}), we can see that
oscillations near the zero solution would involve inverting the linearized
operator in the inhomogeneous system (\ref{inhomogen-problem}). By estimate (\ref{bound-resolvent}),
we are going to loose $\gamma^{1/2}$, which is the size of the inhomogeneous term $\gamma^{1/2} X_0$.
This would prevent us from using the contraction mapping theorem in the neighborhood of the zero solution.
To overcome this obstacle, we introduce the near-identity transformation
$$
X_1 = Y_1 + \gamma^{1/2} \kappa^{-2} X_0, \quad
X_n = Y_n, \;\; n \geq 2
$$
and rewrite system (\ref{equation-X0-zero})--(\ref{equation-Xn}) in the equivalent form
\begin{eqnarray}
\label{equation-X0}
& \phantom{t} &
\ddot{X}_0 + 2 X_0 + \gamma^{-1/2} V'(\gamma^{-1/2} X_0) = 2 \gamma \kappa^{-2} X_0 + 2 \gamma^{1/2} Y_1, \\
\label{equation-X1-new}
& \phantom{t} &
\gamma \ddot{Y}_1 + \kappa^2 Y_1 -\gamma (Y_2 - 2 Y_1)
+ N(Y_1 + \gamma^{1/2} \kappa^{-2} X_0) =
 - \gamma^{3/2} \kappa^{-2} \left( \ddot{X}_0 + 2 X_0 \right),\\
\label{equation-X2-new}
& \phantom{t} &
\gamma \ddot{Y}_2 + \kappa^2 Y_2 - \gamma (Y_3 - 2 Y_2 + Y_1)
+ N(Y_2) =   \gamma^{3/2} \kappa^{-2} X_0, \\
\label{equation-Xn-new}
& \phantom{t} &
\gamma \ddot{Y}_n + \kappa^2 Y_n
-\gamma (Y_{n+1} - 2 Y_n + Y_{n-1})
+ N(Y_n)  = 0, \quad n \geq 3.
\end{eqnarray}
Extracting $\ddot{X}_0 + 2 X_0$ from (\ref{equation-X0}), we can rewrite (\ref{equation-X1-new})
in the equivalent form
\begin{eqnarray}
\nonumber
& \gamma \ddot{Y}_1 + \kappa^2 Y_1 -\gamma (Y_2 - 2 Y_1)
+ N(Y_1 + \gamma^{1/2} \kappa^{-2} X_0) \\
\label{equation-X1-new-new} & \phantom{texttexttexttexttexttext} =
- 2 \gamma^2 \kappa^{-2} Y_1 - 2 \gamma^{5/2} \kappa^{-4} X_0
+ \gamma \kappa^{-2} V'(\gamma^{-1/2} X_0).
\end{eqnarray}
We shall solve the above system in two steps, using the contraction mapping theorem
to solve (\ref{equation-X2-new})-(\ref{equation-X1-new-new}) at fixed $X_0$, and
then Schauder's fixed point theorem to solve (\ref{equation-X0}). In the latter case,
we shall consider equation (\ref{equation-X0}) similar to equation (\ref{source-perturbation})
with $X_0 \in H^1_e$ being close to the solution $X \in H^1_e$
of Lemma \ref{lemma-oscillations} rescaled by
(\ref{scaling-transformation}). The source term depends on $Y_1 \in B_{\delta} \subset H^1_e$ and $X_0$,
and it will be proved that $\delta = {\cal O}(\gamma^{\varepsilon})$ is small as $\gamma \to 0$.

Let us now describe our functional setting in more detail.
Given $\mu \in (0,\frac{1}{2})$ and
$\gamma >0$ small enough, we define
\begin{equation}
\label{neighborhood}
D_{\mu ,\gamma} = \left\{ X_0 \in H^1_e \cap C^1_e : \quad \| X_0  \|_{C^1}
\leq 3 \sqrt{E} ,
\quad
X_0(\tau ) \geq a_0 \sqrt{\gamma} \mbox{ for } 0\leq \tau \leq \frac{T_0}{4}- \gamma^{\frac{1}{2} -\mu}  \right\} .
\end{equation}
When $\gamma$ is small enough, Corollary \ref{lemma-H2-solution} and
Lemma \ref{estimdeltat0} imply that $X \in D_{\mu ,\gamma}$.
Moreover, $D_{\mu ,\gamma}$ defines a closed, bounded and convex subset of $C^1_{\rm per}(0,T_0)$.
Repeating the same arguments as in the proof of
Corollary \ref{lemma-H4-solution}, we obtain
\begin{equation}
\label{bound-on-X-0}
\exists C > 0 : \quad \forall X_0 \in D_{\mu ,\gamma} : \quad
\| V'(\gamma^{-1/2} X_0) \|_{H^1_e} \leq C \gamma^{-\frac{1}{4}-\frac{\mu}{2}}.
\end{equation}

For sufficiently small $\gamma$ in the set
$\tilde{C}_{\omega_0,\nu}$ for fixed $\nu \in (0,1)$, we can rewrite system (\ref{equation-X2-new}),
(\ref{equation-Xn-new}), and (\ref{equation-X1-new-new}) in the form
$$
{\bf Y} + {\cal L}^{-1} {\bf N}({\bf Y},X_0) = {\cal L}^{-1} {\bf F}(Y_1,X_0),
$$
where ${\cal L}^{-1}$ is the Green operator of Lemma \ref{lemma-resolvent} solving the linear inhomogeneous problem
(\ref{inhomogen-problem}),
$$
{\bf N}({\bf Y},X_0) : H^1_e((0,T_0);l^2(\mathbb{N})) \times D_{\mu ,\gamma} \rightarrow H^1_e((0,T_0);l^2(\mathbb{N}))
$$
is the nonlinear operator at the left side of (\ref{equation-X2-new})-(\ref{equation-X1-new-new}), and
$$
{\bf F}(Y_1,X_0) : B_{\delta} \times D_{\mu ,\gamma} \rightarrow H^1_e((0,T_0);l^2(\mathbb{N}))
$$
is the right side of (\ref{equation-X2-new})-(\ref{equation-X1-new-new}).
In order to use the estimates of Lemma
\ref{lemma-resolvent}, we assume $\gamma$ small enough in $\tilde{C}_{\omega_0,\nu}$.
Using (\ref{bound-resolvent}) and (\ref{bound-on-X-0}), we obtain that
\begin{equation}
\label{inhomogeneous-term}
\exists M > 0 : \quad \forall Y_1 \in B_{1}, \quad
\forall X_0 \in D_{\mu ,\gamma} : \quad
\| {\cal L}^{-1} {\bf F}(Y_1,X_0) \|_{H^1_e((0,T_0);l^2(\mathbb{N}))} \leq \frac{M}{2} \gamma^{\epsilon},
\end{equation}
where $\epsilon = \frac{1}{4}-\frac{\mu}{2}$. Now let us
denote by $\mathbb{B}_\delta$
the ball of radius $\delta = M\, \gamma^{\epsilon}$
centered at $0$ in $H^1_e((0,T_0);l^2(\mathbb{N}))$.
Using (\ref{bound-resolvent}) and (\ref{bound-on-N}), we obtain
\begin{equation}
\label{nl-term}
\exists C > 0 : \quad \forall {\bf Y} \in \mathbb{B}_{\delta}, \quad
\forall X_0 \in D_{\mu ,\gamma} : \quad
\| {\cal L}^{-1} {\bf N}({\bf Y},X_0) \|_{H^1_e((0,T_0);l^2(\mathbb{N}))} \leq C \gamma^{5 \epsilon-\frac{1}{2}}.
\end{equation}

Let us further assume $\mu \in \left(0, \frac{1}{4}\right)$, which implies $\epsilon \in \left(\frac{1}{8},\frac{1}{4}\right)$.
From (\ref{inhomogeneous-term})-(\ref{nl-term}) and the triangle inequality, it follows that the map
${\cal L}^{-1} ({\bf N}-{\bf F})(\cdot,X_0)$ maps $\mathbb{B}_\delta$
into itself for $\gamma$ small enough in $\tilde{C}_{\omega_0,\nu}$
and for all $X_0 \in D_{\mu ,\gamma}$.
Moreover, thanks to bound (\ref{bound-on-LC}) and Lemma
\ref{lemma-resolvent}, for all sufficiently small $\gamma$ in
$\tilde{C}_{\omega_0,\nu}$ and for all $X_0 \in D_{\mu ,\gamma}$, the map
${\cal L}^{-1} ({\bf N}-{\bf F})(\cdot,X_0)$ is a contraction in $\mathbb{B}_\delta$, with Lipschitz constant
$ {\cal O}(\gamma^{4\epsilon - \frac{1}{2}})$.
By the contraction mapping theorem (and using the fact that ${\bf N}$ and ${\bf F}$ are in addition
locally Lipschitzian with respect to $X_0 \in H^1_e$), there exists a unique continuous map
\begin{equation}
\label{continuous-map}
D_{\mu ,\gamma} \ni X_0 \mapsto {\bf Y} \in H^1_e((0,T_0);l^2(\mathbb{N}))
\end{equation}
such that $\{ Y_n \}_{n \geq 1}$ solves
(\ref{equation-X2-new})-(\ref{equation-X1-new-new}) and satisfies the bound
\begin{equation}
\label{bound-on-Y}
\exists M > 0 : \quad \forall X_0 \in D_{\mu ,\gamma} : \quad
\| {\bf Y} \|_{H^1_e((0,T_0);l^2(\mathbb{N}))} \leq M  \gamma^{\epsilon}.
\end{equation}

We can now substitute $Y_1$ from solutions of
system (\ref{equation-X2-new})-(\ref{equation-X1-new-new}) to equation (\ref{equation-X0}).
Applying Lemma \ref{lemma-spectrum}, we rewrite
equation (\ref{equation-X0}) in the form
\begin{equation}
\label{fixed-point-equation}
X_0  =  {\cal F}_{\gamma , \mu}(X_0),
\end{equation}
where ${\cal F}_{\gamma , \mu}\, : D_{\mu ,\gamma}  \rightarrow C^1_e$ is defined by
$$
{\cal F}_{\gamma , \mu}(X_0) = \mathcal{G}_{\gamma , \epsilon}(\gamma^{\frac{1}{2}-\epsilon} 2 \kappa^{-2} X_0 +
2 Y_1(X_0) \gamma^{-\epsilon}),
$$
$\mathcal{G}_{\gamma , \epsilon}$ is the nonlinear Green operator of Lemma \ref{lemma-spectrum}
solving equation (\ref{source-perturbation}), and
$Y_1(X_0)$ is defined from the map (\ref{continuous-map}).
By Lemma \ref{lemma-continuity} and the continuity of the map (\ref{continuous-map}),
the map ${\cal F}_{\gamma , \mu}$ is continuous.
Moreover, thanks to the estimates of Lemma \ref{lemma-spectrum} and the fact that $\mu >0$,
${\cal F}_{\gamma , \mu}$ maps $D_{\mu ,\gamma}$ into itself when $\gamma$ is small enough.
Observing that $\Delta = \frac{d^2}{d\tau^2}\, : H^3_e \rightarrow H^1_e$ is invertible
and considering equation (\ref{source-perturbation}) that defines $\mathcal{G}_{\gamma , \epsilon}(F)$
implicitly, we have the equality
\begin{equation}
\label{regul}
G_{\gamma , \epsilon}(F)=\Delta^{-1}\, [\,  \gamma^{\varepsilon+1/2} F -2 G_{\gamma , \epsilon}(F)
-  \gamma^{-1/2} V'(\gamma^{-1/2}  G_{\gamma , \epsilon}(F)    )  \, ] .
\end{equation}
Since the embedding of $H^3_e$ into $C^1_e$ is compact,
it follows that $G_{\gamma , \epsilon}\, : \, B_\delta \subset H^1_e \rightarrow C^1_e$
is compact, hence ${\cal F}_{\gamma , \mu}\, : D_{\mu ,\gamma}  \rightarrow C^1_e$ is compact.
Consequently, by the Schauder fixed-point theorem,
there exists a solution $X_0 \in D_{\mu ,\gamma}$ of equation (\ref{fixed-point-equation})
for sufficiently small $\gamma > 0$.
Moreover, Lemma \ref{lemma-spectrum} ensures the existence of $\theta >0$ such that
$$
X_0(\tau ) \geq a_0 \sqrt{\gamma} \quad \mbox{ for } \quad
0\leq \tau \leq \frac{T_0}{4}- \theta \gamma^{\frac{1}{2}}.
$$
Repeating the same estimates as above with $\mu=0$, we obtain
\begin{equation}
\label{bound-on-Yimproved}
\| {\bf Y} \|_{H^1_e((0,T_0);l^2(\mathbb{N}))} = {\cal O}(\gamma^{1/4}).
\end{equation}
Fixing now $\epsilon = \frac{1}{4}$ in Lemma \ref{lemma-spectrum},
estimate (\ref{bound-L2-F}) yields finally
\begin{equation}
\label{bound-on-X0}
\exists C > 0 : \quad \| X_0 - X \|_{H^1_e} \leq C \gamma^{\frac{1}{4}}.
\end{equation}
Combining all transformations above with
bounds (\ref{bound-on-Yimproved}) and (\ref{bound-on-X0}) as well as using
embedding of $H^1_e$ into $L^{\infty}_e$ and of $l^2(\mathbb{N})$
into $l^{\infty}(\mathbb{N})$, we obtain bound (\ref{bound-final}).

\begin{remark}
If assumption (P2) is relaxed with the expansion $V'(x) = \kappa^2 x + {\cal O}(x^3)$ near $x = 0$,
then the map ${\cal L}^{-1} {\bf N}({\bf Y},X_0)$ is a contraction operator
with respect to ${\bf Y}$ in a ball of radius $\delta = \gamma^{\varepsilon}$
for any $\varepsilon > \frac{1}{4}$. In this case, the inhomogeneous term ${\cal L}^{-1} {\bf F}(Y_1,X_0)$
with the bound (\ref{inhomogeneous-term}) is critical with $\varepsilon = \frac{1}{4}$
and prevent us to close the arguments of the contraction mapping theorem.
\end{remark}

\begin{remark}
The Lipschitz continuity of the map (\ref{continuous-map})
can also be established but the Lipschitz constant may have a bad behavior
as $\gamma \to 0$ because of the factor $\gamma^{-1/2}$ in the last term
$V(\gamma^{-1/2} X_0)$ of equation (\ref{equation-X1-new-new}). This
obstacle prevents us from the use of the contraction mapping theorem
for equation (\ref{fixed-point-equation}).
\end{remark}

\section{Exponential decay on $\mathbb{Z}$}
\label{section-5}

Our last result is to show that the large-amplitude discrete breather
constructed in Theorem \ref{theorem-main} decays exponentially in $n$ on $\mathbb{Z}$.
The arguments repeat those of reference \cite{James4}, to which we shall refer for
some standard steps of the proof.

\begin{lemma}
Let ${\bf x}(t) \in l^2(\mathbb{Z},L^{\infty}_{\rm per}(0,T))$ be the solution in Theorem \ref{theorem-main}.
There exists a constant $D_0 > 0$ such that
\begin{equation}
\label{exponential-decay}
\sup_{t \in [0,T]} | x_n(t) | \leq (D_0 \gamma)^{(2n-1)/4}, \quad n \geq 2,
\end{equation}
for all sufficiently small $\gamma  \in \tilde{C}_{E,\nu}$.
\end{lemma}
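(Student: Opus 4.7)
The plan is to bootstrap a per-site geometric estimate by iterating the tail resolvent bound of Lemma \ref{lemma-resolvent}. After the near-identity transformation of Section \ref{section-4}, one has $Y_n(\tau) = x_n(t)$ for all $n \geq 2$, so it suffices to prove $\|Y_N\|_{H^1_e} \leq (D_0 \gamma)^{(2N-1)/4}$ for $N \geq 2$; the pointwise bound (\ref{exponential-decay}) then follows from the continuous embeddings $H^1_e \hookrightarrow L^{\infty}_e$ and $l^2(\mathbb{N}) \hookrightarrow l^{\infty}(\mathbb{N})$.

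For each fixed $N \geq 2$, I would view the semi-infinite tail $\{Y_n\}_{n \geq N}$ as a sequence re-indexed by $k = n - N + 1 \geq 1$, which again solves a system of the form (\ref{inhomogen-problem}). The corresponding forcing is concentrated at the first re-indexed site: its value at $n = N$ equals $\gamma Y_{N-1}$ (the coupling to the exterior), augmented by $\gamma^{3/2}\kappa^{-2} X_0$ when $N = 2$, while the remaining components are driven only by the nonlinear term $N(Y_n)$. The translation invariance of the discrete Laplacian makes the Fourier sine symbol $M(n,q)$ from (\ref{infpb}) independent of the starting index, so the lower bound (\ref{bound-denominator}) and hence the resolvent estimate (\ref{bound-resolvent}) apply verbatim to the shifted tail.

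Combining (\ref{bound-resolvent}) with (\ref{bound-on-N}) and the a priori bound $\sup_{n \geq 1} \|Y_n\|_{H^1_e} = \mathcal{O}(\gamma^{1/4})$ from (\ref{bound-on-Yimproved}), the nonlinear contribution $\|\mathbf{N}(\{Y_n\}_{n \geq N})\|_{H^1_e(l^2(\mathbb{N}))}$ is of order $\gamma \,\|\{Y_n\}_{n \geq N}\|_{H^1_e(l^2(\mathbb{N}))}$ and can be absorbed on the left for $\gamma$ small enough. What remains is a per-step estimate
$$
\|Y_N\|_{H^1_e} \;\leq\; \|\{Y_n\}_{n \geq N}\|_{H^1_e(l^2(\mathbb{N}))} \;\leq\; C' \gamma^{1/2} \|Y_{N-1}\|_{H^1_e} + \mathcal{O}(\gamma),
$$
in which the $\mathcal{O}(\gamma)$ term is present only at $N=2$ (from the $X_0$-source) and is dominated by the expected $\gamma^{3/4}$ target since $\gamma^{-1/2} \cdot \gamma^{3/2} \|X_0\|_{H^1_e} = \mathcal{O}(\gamma)$. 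Iterating from the base case $\|Y_1\|_{H^1_e} = \mathcal{O}(\gamma^{1/4})$ yields $\|Y_N\|_{H^1_e} \leq C_*(C')^{N-1}\gamma^{(2N-1)/4}$, and choosing $D_0 \geq \max(C_*^4,(C')^2)$ absorbs the prefactor into $D_0^{(2N-1)/4}$.

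The main obstacle is to verify cleanly that the resolvent estimate transfers to the shifted tail: one must check that the re-indexed sine transform produces exactly the same multiplier $\kappa^2 - m^2 \omega_0^2 \gamma + 2\gamma(1-\cos q)$, so that the non-resonance bound (\ref{bound-denominator}) is preserved, and that the forcing concentrated at a single site still admits the $C_0 \gamma^{-1/2}$ bound of Lemma \ref{lemma-resolvent}. Once this is in hand, the absorption of the quintic nonlinearity using assumption (P2) and the elementary geometric iteration are routine, and the constant $D_0$ can be chosen uniformly in $N$.
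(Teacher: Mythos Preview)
Your approach is correct but differs from the paper's. The paper does not re-apply the full lattice resolvent of Lemma \ref{lemma-resolvent} to shifted tails. Instead it inverts the \emph{scalar} operator $\mathcal{A}_\gamma = \gamma\,d^2/d\tau^2 + \kappa^2$ on $H^1_e$ site by site, obtaining from (\ref{equation-Xn-new}) the two-sided inequality
\[
\|Y_n\|_{H^1_e}\;\le\; M\gamma^{1/2}\bigl(\|Y_{n+1}\|_{H^1_e}+\|Y_{n-1}\|_{H^1_e}\bigr),\qquad n\ge 3,
\]
and then invokes a discrete maximum principle (cited from \cite{James4}) to convert this three-term recursion into geometric decay $\|Y_n\|\le (2M\gamma^{1/2})^{n-2}\|Y_2\|$; the bound $\|Y_2\|_{H^1_e}=\mathcal{O}(\gamma^{3/4})$ is then read off from (\ref{equation-X2-new}) separately. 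Your route reuses the existing tail estimate (\ref{bound-resolvent}) and yields directly the one-sided recursion $\|Y_N\|\le C'\gamma^{1/2}\|Y_{N-1}\|$, so you bypass the maximum-principle step entirely; the price is the extra check (which you correctly flag) that the resolvent bound is insensitive to the starting index, which holds because the re-indexed system has exactly the form (\ref{inhomogen-problem}) and hence the same Fourier symbol $M(m,q)$. Both arguments arrive at the same estimate $\|Y_n\|_{H^1_e}=\mathcal{O}(\gamma^{(2n-1)/4})$ and close in the same way via the $H^1\hookrightarrow L^\infty$ embedding.
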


\begin{proof}
The operator $ \mathcal{A}_\gamma = \gamma \frac{d^2}{d\tau^2}+\kappa^2
\, : H^3_e \subset H^1_e \rightarrow H^1_e$ is unbounded, closed
and self-adjoint. Its spectrum consists of simple eigenvalues
$\kappa^2-\gamma k^2 {\omega}_0^2$ ($k \in \mathbb{Z}$), hence
$$
\| \mathcal{A}_\gamma^{-1} \|_{\mathcal{L}(H^1_e) }=
\frac{1}{ \inf_{k \in \Z}{| \kappa^2-\gamma k^2 {\omega}_0^2 |}}
=
\frac{1}{ \inf_{k \in \Z}{| M(k,0) |}},
$$
where $M(k,q)$ is defined in (\ref{infpb}). Using estimate
(\ref{bound-denominator}) we get consequently
\begin{equation}
\label{invop}
\| \mathcal{A}_\gamma^{-1} \|_{\mathcal{L}(H^1_e) }= {\cal O}(\gamma^{-1/2})
\end{equation}
when $\gamma \rightarrow 0$ in $\tilde{C}_{E,\nu}$.

Now we rewrite system (\ref{equation-Xn-new}) as
\begin{equation}
\label{equation-Xn-newbis}
Y_n = \mathcal{A}_\gamma^{-1}\, [\,   \gamma (Y_{n+1} - 2 Y_n + Y_{n-1})
- N(Y_n)  \, ],
 \quad n \geq 3.
\end{equation}
By estimate (\ref{bound-on-Yimproved}) we have $\| Y_n \|_{H^1_e} = {\cal O}(\gamma^{1/4})$
uniformly in $n \in \N$, which in conjunction with (\ref{bound-on-N}) yields
$\| N(Y_n) \|_{H^1_e} \leq C \, \gamma\, \| Y_n \|_{H^1_e}$.
Using this estimate and the bound (\ref{invop}) in (\ref{equation-Xn-newbis}), one finds
$M>0$ such that for all $\gamma \in \tilde{C}_{E,\nu}$ small enough and for all $n\geq 3$
\begin{equation}
\label{est2}
 \| \, Y_n  \, \|_{H^1_e} \leq
M\, \gamma^{1/2}\,
(\,
 \| \, Y_{n+1}  \, \|_{H^1_e}
+ \, \| \, Y_{n-1}  \, \|_{H^1_e}
\, ) .
\end{equation}
A simple application of the discrete maximum principle yields then
(see \cite{James4}, Lemma 3.3)
\begin{equation}
\label{estiminter}
\| Y_n \|_{H^1_{e}} \leq (2 M \gamma^{1/2})^{n-2}\,  \| \, Y_2  \, \|_{H^1_e}, \quad n \geq 2.
\end{equation}
Using equation (\ref{equation-X2-new}), estimates (\ref{invop}) and (\ref{estiminter}) with $n=3$,
the fact that $\| Y_1 \|_{H^1_e} = {\cal O}(\gamma^{1/4})$ and
$\| X_0 \|_{H^1_e} = {\cal O}(1)$
(direct consequence of (\ref{bound-on-X0}) and Corollary \ref{lemma-H2-solution}), we get
\begin{equation}
\label{estimy2}
\| Y_2 \|_{H^1_e} = {\cal O}(\gamma^{3/4}).
\end{equation}
Then one completes the proof by
putting estimates (\ref{estiminter}) and (\ref{estimy2}) together and using
the continuous embedding of $H^1_e$ in $L^{\infty}_e$.
\end{proof}

\end{document}